\documentclass[10pt,aps,pra,twocolumn,showpacs,superscriptaddress]{revtex4-2}
\usepackage{amsthm}
\usepackage{amssymb}
\usepackage{bm}
\newtheorem{theorem}{Theorem} 
\newtheorem{proof*}{Proof}
\theoremstyle{remark}
\newtheorem{remark}{Remark}
\usepackage{graphicx}
\usepackage{booktabs}
\usepackage[colorlinks=true,linkcolor=blue,citecolor=blue,urlcolor=magenta]{hyperref}
\usepackage{amsmath}

\begin{document}  
\title {\bf Advancing Quantum Process Tomography through Quantum Compilation}

\author{Huynh Le Dan Linh} 
\affiliation{University of Information Technology, Ho Chi Minh City, 700000, Vietnam}
\affiliation{Vietnam National University, Ho Chi Minh City, 700000, Vietnam}

\author{Vu Tuan Hai}
\affiliation{University of Information Technology, Ho Chi Minh City, 700000, Vietnam}
\affiliation{Vietnam National University, Ho Chi Minh City, 700000, Vietnam}
\affiliation{Nara Institute of Science and Technology, 8916–5 Takayama-cho, Ikoma, Nara 630-0192, Japan}

\author{Le Bin Ho} 
\thanks{Electronic address: binho@fris.tohoku.ac.jp}
\affiliation{Department of Applied Physics, 
Graduate School of Engineering, 
Tohoku University, 
Sendai 980-8579, Japan}
\affiliation{Frontier Research Institute 
for Interdisciplinary Sciences, 
Tohoku University, Sendai 980-8578, Japan}

\date{\today}

\begin{abstract}
Quantum process tomography (QPT) plays a central role in characterizing quantum gates and circuits, diagnosing quantum devices, calibrating hardware, and supporting quantum error correction. However, conventional QPT methods face challenges related to scalability and sensitivity to noise. In this work, we propose a QPT framework based on quantum compilation, which represents quantum processes using optimized Kraus operators and Choi matrices. By formulating QPT as a compilation and optimization problem, our approach significantly reducing measurement and computational overhead while maintaining reconstruction accuracy. We benchmark the method using numerical simulations of Haar-random unitary gates and demonstrate a reliable process reconstruction. We further apply the framework to dephasing channels with both time-homogeneous and time-inhomogeneous noise, as well as to depolarizing and amplitude-damping channels, where stable performance is observed across different noise regimes. These results indicate that quantum compilation–based QPT can serve as a practical alternative to standard QPT methods for quantum process characterization and device validation.
\end{abstract}
\maketitle

\section{Introduction}
Quantum process tomography (QPT) is a standard technique or characterizing unknown quantum processes \cite{Chuang01111997,PhysRevLett.78.390,nielsen2000quantum}. A standard approach involves preparing a set of quantum states, applying the target process, and reconstructing the output states via quantum state tomography \cite{Chuang01111997}. The QPT is crucial for verifying and characterizing the operation of quantum gates and circuits \cite{PhysRevLett.78.390,PhysRevLett.93.080502,PhysRevLett.97.220407,Tinkey_2021,Bialczak2010}.
However, the conventional QPT suffers from scalability limitations: both measurement and computational costs grow exponentially with system size, making it infeasible for large quantum systems \cite{9992920,10386455,PhysRevLett.106.100401,PhysRevA.108.032419}.
As quantum devices continue to scale, developing more efficient QPT methods that maintain high precision while reducing resource requirements has become increasingly important.

Recent advancements have aimed to improve the efficiency of QPT. For example, Ahmed et al. \cite{PhysRevLett.130.150402} proposed a machine learning-based approach that uses the Riemannian gradient descent to optimize a set of Kraus operators using, allowing efficient characterization of quantum processes through iterative adjustments. Building on this, Daniel et al. \cite{volya2024fastquantumprocesstomography} showed that the Riemannian gradient descent accelerates the optimization of quantum processes by exploiting the structure of the parameter space through the Riemannian geometry \cite{Jiang2015,PhysRevA.107.062421}. These approaches demonstrate the potential of machine learning and advanced optimization techniques to improve the QPT efficiency \cite{Torlai2023,PhysRevA.105.032427,PhysRevA.108.032419}. However, despite these improvements, these methods remain computationally expensive due to the large number of measurements required, since the tomography step still relies on the standard QPT methods.

In this work, we introduce a quantum compilation-based QPT (CQPT) framework, which combines quantum compilation techniques\cite{Hai2023,TuanHai2024} with QPT methods to optimize the tomography process, reduce computational costs, and maintain high precision.
A standard quantum compilation converts given target unitaries into trainable unitaries, enabling applications such as gate optimization \cite{heya2018variationalquantumgateoptimization}, quantum-assisted compiling \cite{Khatri2019quantumassisted}, continuous-variable targets \cite{PRXQuantum.2.040327}, quantum state tomography \cite{Hai2023}, and quantum dynamics simulation \cite{HAI2024101726}. In the CQPT, we extend the target unitaries to quantum processes, and the trainable unitaries become the trainable Kraus operators or Choi matrices.

The core innovation of this framework is the application of quantum compilation, a quantum machine learning technique, to optimize QPT, thereby reducing the complexity of the traditional QPT methods.
This approach offers a scalable framework that can enable more practical QPT than existing methods.

To achieve this, we develop two key theorems. The first theorem addresses the optimization of Kraus operators through the quantum compilation, while the second extends a similar framework for Choi matrices. The Kraus operators representation is crucial for characterizing unitary processes, as it provides a direct description of the quantum channel in terms of experimentally determinable operators. Similarly, Choi matrices provide a compact representation of a quantum operation, which is particularly valuable for understanding the behavior of quantum channels, especially in noisy environments such as open quantum systems.
We demonstrate the practical applications of these theorems in several quantum processes, including random unitary processes, dephasing processes, and quantum systems affected by both time-homogeneous and time-inhomogeneous noise. 

To optimize in the CQPT, we employ the Riemannian gradient descent method. This approach reduces the computational cost and incorporates the geometry of the quantum process space, resulting in a more accurate and efficient tomography process.

Through numerical simulations and analytical derivations, we demonstrate that the CQPT method significantly improves efficiency and reduces the computational complexity of the QPT, making it practical for large-scale quantum systems. This approach holds promise for both fundamental quantum information research and practical applications, including quantum computing and communication, where precise process characterization is essential.

\begin{figure*}[t]
    \centering
    \includegraphics[width=0.7\linewidth]{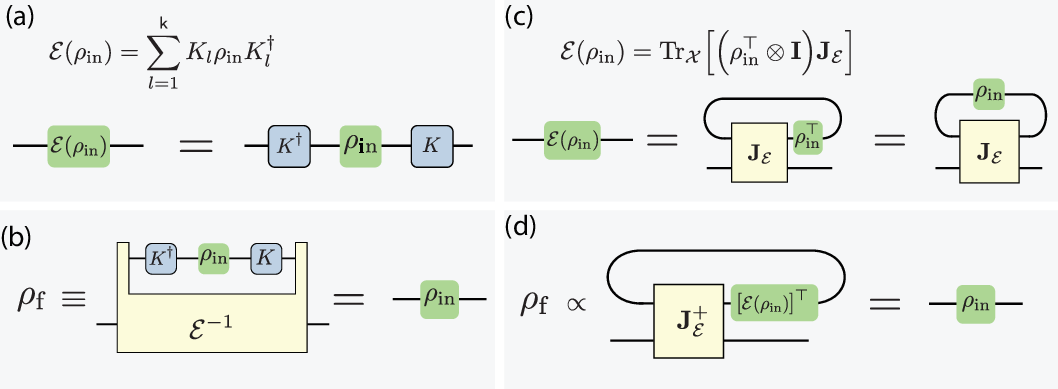}
    \caption{\textbf{Graphical presentation}.
    (a) Quantum channel representation with Kraus operators. (b) Visualization of Theorem \ref{theo:1}. (c) Quantum channel representation using the Choi matrix. (d) Visualization of Theorem \ref{theo:2}.
    }
    \label{fig:1}
\end{figure*}

\section{Compilation-based quantum process tomography framework}
QPT is a method for fully characterizing an unknown quantum process $\mathcal{E}$, which is modeled as a completely positive and trace-preserving (CPTP) map. The process $\mathcal{E}$ can be expressed in the operator-sum representation as
\(
\mathcal{E}(\rho) = \sum_l K_l \rho K_l^\dagger,
\)
where $K_l$ are Kraus operators satisfying the completeness relation $\sum_l K_l^\dagger K_l = \mathbf{I}$. Alternatively, the QPT represents $\mathcal{E}$ through its Choi matrix $\mathbf{J}_{\mathcal{E}}$, which is an operator spanning on $H_{\mathcal{X}} \otimes H_{\mathcal{Y}}$, ie., 
\(
\mathcal{E}(\rho) = {\rm Tr}_{\mathcal{X}}
    \big[\big(
        \rho^\top\otimes \mathbf{I}
    \big)\mathbf{J}_{\mathcal{E}}
    \big].
\)   
This matrix is constructed using the Choi-Jamio{\l}kowski isomorphism as
\(
\mathbf{J}_{\mathcal{E}} = \sum_{i,j} |i\rangle \langle j| \otimes \mathcal{E}(|i\rangle \langle j|)
\)
where $\{|i\rangle\}$ is an orthonormal basis. 
For an \(N\)-qubit system, the Hilbert space dimension is \(2^N\), with \(K_l \in \mathbb{C}^{2^N \times 2^N}\) and \(\mathbf{J}_\mathcal{E} \in \mathbb{C}^{4^N \times 4^N}\).

Experimentally, QPT proceeds by preparing a set of input states $\{\rho_i\}$, applying the quantum process $\mathcal{E}$, and measuring the output states using a positive operator-valued measure (POVM) $\{M_j\}$. The collected measurement data are used to extract the Kraus operators or to reconstruct the Choi matrix \( \mathbf{J}_{\mathcal{E}} \) via direct inversion methods \cite{Chuang01111997,PhysRevLett.78.390}, statistical estimation approaches \cite{Bouchard2019quantumprocess,Granade_2017,PhysRevLett.102.020504,PhysRevB.90.144504,Torlai2023}, or optimization-based techniques \cite{Torlai2023,PhysRevA.105.032427,PhysRevA.108.032419}. The QPT is essential for validating quantum gates \cite{Bialczak2010, Gaikwad2022}, benchmarking devices \cite{Bialczak2010}, and analyzing noise in quantum systems \cite{Bialczak2010}.

In this work, we present a compilation-based approach for quantum process tomography (CQPT). In the standard compilation, given a target unitary \( U \), a trainable unitary \( V(\bm\theta) \) is introduced and optimized over parameters \( \bm\theta \) such that \( V^\dagger(\bm\theta) U = \mathbf{I} e^{i\phi} \) for an arbitrary global phase \( \phi \). Here, we extend \( U \) to a CPTP process \( \mathcal{E} \), and generalize \( V(\bm\theta) \) to a set of $\mathsf{k}$ Kraus operators \( \Bbbk = (K_1, K_2, \dots, K_\mathsf{k})^\top \), $\Bbbk\in\mathbb{C}^{\mathsf{k}2^N\times 2^N}$, where \( \Bbbk^\dagger \Bbbk = \mathbf{I} \) and \( K_l \) are optimizable Kraus operators. The CQPT framework is then defined as follows.

\begin{theorem}\label{theo:1}
(Kraus-based CQPT):
A set of trainable Kraus operators $\Bbbk = \big(K_1,\cdots,K_\mathsf{k}\big)^\top,\ \Bbbk^\dagger\Bbbk = \mathbf{I}$, represents a quantum process $\mathcal{E}$, which is a fully invertible CPTP map, if the following condition holds
\begin{equation}
    \rho_{\rm f} \equiv \mathcal{E}^{-1}\Big(\sum_{l=1}^\mathsf{k} K_l \rho_{\rm in} K_l^\dagger\Big) = \rho_{\rm in},
\end{equation}
where $\rho_{\rm in(f)}$ is the initial (final) state, and $\mathcal{E}^{-1}$ is the inverse process of $\mathcal{E}$. Here, $\mathcal{E}^{-1}$ is a CPTP only when  
$\mathcal{E}$ is unitary and is generally non-CPTP for non-unitary processes.
 \end{theorem}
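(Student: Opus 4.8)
The plan is to show that the stated condition forces the trainable channel to coincide with $\mathcal{E}$ as a superoperator, which is exactly what it means for $\Bbbk$ to ``represent'' the process. First I would collect the trainable Kraus operators into a single linear map on the space of operators, $\tilde{\mathcal{E}}(\cdot) \equiv \sum_{l=1}^{\mathsf{k}} K_l (\cdot) K_l^\dagger$. The constraint $\Bbbk^\dagger\Bbbk = \mathbf{I}$, i.e.\ $\sum_l K_l^\dagger K_l = \mathbf{I}$, guarantees that $\tilde{\mathcal{E}}$ is trace preserving, while its operator-sum form guarantees complete positivity, so $\tilde{\mathcal{E}}$ is itself a CPTP map. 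The hypothesis then reads compactly as $\mathcal{E}^{-1}\big(\tilde{\mathcal{E}}(\rho_{\rm in})\big) = \rho_{\rm in}$.

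The key step is to promote this pointwise identity to an identity of superoperators. Because the condition is required to hold for a tomographically complete family of input states $\{\rho_{\rm in}\}$ --- a set whose elements span the full operator space $\mathbb{C}^{2^N\times 2^N}$ --- and because $\mathcal{E}$, $\mathcal{E}^{-1}$, and $\tilde{\mathcal{E}}$ are all linear maps on that space, linearity extends the equality from the spanning states to every operator. Hence $\mathcal{E}^{-1}\circ\tilde{\mathcal{E}} = \mathrm{id}$ as maps on $\mathbb{C}^{2^N\times 2^N}$.

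Next I would invoke the full invertibility of $\mathcal{E}$. By assumption $\mathcal{E}$ is a bijection on the operator space, so $\mathcal{E}^{-1}$ exists as a (possibly non-CP) linear inverse satisfying $\mathcal{E}\circ\mathcal{E}^{-1} = \mathrm{id}$. Composing the identity $\mathcal{E}^{-1}\circ\tilde{\mathcal{E}} = \mathrm{id}$ on the left with $\mathcal{E}$ yields $\tilde{\mathcal{E}} = \mathcal{E}$, that is, $\sum_l K_l \rho K_l^\dagger = \mathcal{E}(\rho)$ for all $\rho$, which establishes that the trainable Kraus operators represent $\mathcal{E}$.

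The main obstacle, I expect, lies in two delicate points. First, justifying the passage from a single-state condition to the full superoperator equality genuinely requires the input states to be informationally complete rather than a single fixed $\rho_{\rm in}$; without this, $\mathcal{E}^{-1}\circ\tilde{\mathcal{E}}$ could agree with the identity on a proper subspace only. Second, one must be careful about the well-definedness of $\mathcal{E}^{-1}$: for non-unitary $\mathcal{E}$ the inverse is not a physical channel (it fails complete positivity), so $\mathcal{E}^{-1}$ must be read purely as a linear inverse on the operator space, and one should confirm it is defined on the image $\tilde{\mathcal{E}}(\rho_{\rm in})$. The remark that $\mathcal{E}^{-1}$ is CPTP only in the unitary case is precisely the statement that this inverse, while always available as a linear map under the invertibility hypothesis, need not be a legitimate quantum operation otherwise; this does not affect the algebraic argument but is exactly why the hypothesis is phrased for a \emph{fully invertible} map rather than an arbitrary CPTP map.
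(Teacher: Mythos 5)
Your proposal is correct, but it runs in the opposite direction from the paper's own proof and adds a step the paper leaves implicit. The paper argues: assume $\Bbbk$ represents $\mathcal{E}$, i.e.\ $\sum_l K_l \rho_{\rm in} K_l^\dagger = \mathcal{E}(\rho_{\rm in})$, then apply $\mathcal{E}^{-1}$ to both sides to obtain $\rho_{\rm f} = \rho_{\rm in}$ --- that is, it derives the condition from the representation, and the converse (the direction actually asserted by the theorem) is left to the reader via the observation that invertibility of $\mathcal{E}$ makes every step reversible. You instead prove the stated direction head-on: package the Kraus set into a CPTP map $\tilde{\mathcal{E}}$, lift the pointwise identity $\mathcal{E}^{-1}\circ\tilde{\mathcal{E}}(\rho_{\rm in}) = \rho_{\rm in}$ to the superoperator identity $\mathcal{E}^{-1}\circ\tilde{\mathcal{E}} = \mathrm{id}$ using linearity over an informationally complete family of inputs, and then compose with $\mathcal{E}$ to conclude $\tilde{\mathcal{E}} = \mathcal{E}$. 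The informational-completeness lemma is your genuine addition: the paper's manipulation is carried out for a fixed $\rho_{\rm in}$ and never addresses the fact that agreement on a single state (or any non-spanning family) cannot pin down the channel, so $\mathcal{E}^{-1}\circ\tilde{\mathcal{E}}$ could equal the identity only on a proper subspace. Your version closes that gap explicitly, and it is also what the paper's practical protocol silently relies on --- the $6^N$ random input states and the Haar average in the cost function of Eq.~(\ref{eq:cost}) are precisely a spanning family. What the paper's route buys is brevity (two lines) and a presentation that mirrors the experimental pipeline (evolve, invert, compare); what your route buys is a logically complete proof of the implication as stated, together with the correct reading of $\mathcal{E}^{-1}$ as a linear (generally non-CP) inverse, consistent with the theorem's closing remark.
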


\begin{proof}
For an initial quantum state $\rho_{\rm in}$, its evolution under the process $\mathcal{E}$ is given by $
\rho_{\mathcal{E}} = \mathcal{E}(\rho_{\rm in})$. Alternatively, the evolved state can be written in terms of Kraus operators as $\rho_{\Bbbk} = \sum_l K_l \rho_{\rm in} K_l^\dagger.$
The set of Kraus operators $\Bbbk$ represents $\mathcal{E}$ if
$
\rho_{\mathcal{E}} = \rho_{\Bbbk},
$
which implies
$
\sum_l K_l \rho_{\rm in} K_l^\dagger = \mathcal{E}(\rho_{\rm in}).
$
Applying the inverse process $\mathcal{E}^{-1}$ to both sides gives
$
\rho_{\rm f} \equiv \mathcal{E}^{-1}\Big(\sum_l K_l \rho_{\rm in} K_l^\dagger\Big) = \rho_{\rm in}.
$
Thus, $\Bbbk$ is a valid representation of $\mathcal{E}$.      
\end{proof} 

\begin{remark} (Unitary and near-unitary processes).
We emphasize that Theorem~\ref{theo:1} applies to unitary dynamics and near-unitary processes.
    If $\mathcal{E}$ is a unitary process, i.e., 
    $\mathcal{E}(\rho) = U\rho U^\dagger$ for some unitaries $U$,
    then $\mathcal{E}^{-1}(\rho) = U^\dagger\rho U$.
    This approach is crucial for validating quantum gates and benchmarking quantum devices.  
\end{remark}

Figure~\ref{fig:1} (a) illustrates a quantum channel where the initial state \( \rho_{\rm in} \) is transformed by a set of Kraus operators. Figure~\ref{fig:1} (b) visualizes Theorem \ref{theo:1}, showing that, after applying the Kraus operators, the inverse operator \( \mathcal{E}^{-1} \) is performed. The final state \( \rho_{\rm f} \) returns to \( \rho_{\rm in} \) when \( \sum_l K_l \rho_{\rm in} K_l^\dagger = \mathcal{E}(\rho_{\rm in}) \).

The framework in Refs. \cite{volya2024fastquantumprocesstomography,PhysRevLett.130.150402} uses two evolution paths: one to obtain $\rho_{\mathcal{E}} = \mathcal{E}(\rho_{\rm in})$, and the other to obtain $\rho_{\Bbbk} = \sum_l K_l \rho_{\rm in} K_l^\dagger$. The Kraus operators $\Bbbk$ are then optimized by minimizing the cost function
\(
C(\Bbbk) = \sum_{ij} \Big( {\rm Tr} \big[ M_j (\rho^{i}_{\mathcal{E}} - \rho^{i}_{\Bbbk}) \big] \Big)^2,
\)
where \( i \in [1, \mathsf{n}] \) labels \( \mathsf{n} \) different input states \( \rho_{\rm in}^1, \dots, \rho_{\rm in}^\mathsf{n} \), and \( j \in [1, \mathsf{p}] \) corresponds to \( \mathsf{p} \) elements of the POVM.  This method is resource-intensive because it requires full measurements of the POVM \( \{M_j\} \).

In our CQPT framework, we employ a quantum compilation process that benefits from ``single-shot'' measurements \cite{PhysRevLett.126.170504,Hai2023}. Starting from an initial pure state $\left| \psi \right\rangle$, we apply the trainable Kraus operators $\Bbbk$ followed by $\mathcal{E}^{-1}$, which yields the final state
$\rho_{\rm f} = \mathcal{E}^{-1}\Big(\sum_l K_l |\psi\rangle\langle\psi|K_l^\dagger\Big)$.
The cost function $C(\Bbbk)$ is defined through the infidelity as
\begin{eqnarray}\label{eq:ck_our}
C(\Bbbk) &=& 1 - \int_{\psi} \left| \mathrm{Tr}\left[\rho_{\mathrm{f}} |\psi\rangle\langle\psi| \right] \right|^2 \nonumber \\
        &=& 1 - \int_{W \in \mathrm{Haar}} \left| \mathrm{Tr} \left[ W^\dagger \rho_{\mathrm{f}} W |\mathbf{0}\rangle\langle \mathbf{0}| \right] \right|^2.
\label{eq:cost}
\end{eqnarray}
where we used $|\psi\rangle = W|\bm 0\rangle$ with an arbitrary 
random unitary $W$ and $|\bm 0\rangle \equiv |00\cdots0\rangle$ is the initial state of the quantum register.

Practically, in the quantum register, we first prepare $N$ qubits initialized by $|00\cdots 0\rangle$, apply $W$, followed by $\Bbbk$ and $\mathcal{E}^{-1}$, and apply $W^\dagger$ again. Then, we measure the final state in the basis \( |\bm{0}\rangle \) and obtain the probability \( p(00\ldots0) = {\rm Tr}\big[W^\dagger\rho_{\rm f}W|\bm{0}\rangle\langle\bm{0}|\big] \), which is used for optimization.
This method is known as single-shot measurement, where we only need the probability of the classical outcome $00\ldots0$, see also Refs.\cite{PhysRevLett.126.170504, Hai2023}.
In practice, $W\in{\rm Haar}$ is replaced by a finite set $W = \big\{W_i\,, \forall i \in [1, \mathsf{n}] \big\}$,
corresponding to $\mathsf{n}$ different input states.
The model is trained by updating \( \Bbbk \) iteratively until \( C(\Bbbk) \approx 0 \), ensuring that \( \Bbbk \) optimally represents the Kraus operators for the quantum channel \( \mathcal{E} \).

To optimize the cost function, we employ the Riemannian optimization method \cite{volya2024fastquantumprocesstomography, PhysRevLett.130.150402, PhysRevA.107.062421} and perform the gradient descent on the cost function $C(\Bbbk)$, which is defined on a Riemannian manifold. The Riemannian gradient is derived by projecting the Euclidean gradient onto the tangent space as
\begin{equation}
\rm {grad}\, C(\Bbbk) = \nabla C(\Bbbk) - \Bbbk \, \rm{Sym}(\Bbbk^\dagger \nabla C(\Bbbk)).
\end{equation}
where $ \rm {Sym}(\mathbf{A}) = (\mathbf{A} + \mathbf{A}^\dagger)/2 $. 
Here, we use `grad' to denote the Riemannian gradient and `$\nabla$' for the Euclidean gradient.
Using the Riemannian gradient descent, we update $ \Bbbk $ iteratively as
\begin{equation}\label{eq:optimizer}
    \Bbbk_{t+1} = \rm {Retract}_\Bbbk\big[-\alpha \, \rm {grad} C(\Bbbk)\big],
\end{equation}
where $\alpha$ is the learning rate, and the retraction operation ensures the update remains on the manifold. We employ an iterative Cayley transformation for the retraction, which approximates a second-order mapping of a tangent vector onto the manifold \cite{Tagare2011NotesOO}. Further details are provided in Appendix C.

Theorem~\ref{theo:1} also applies to near-unitary quantum processes.
Consider a weakly noisy channel with noise strength \(0 < \epsilon\ll 1\), described as
\(
\mathcal E(\rho) = 
(1-\epsilon)U\rho U^\dagger
+
\epsilon\frac{I}{2}.
\)
In this regime, an effective inverse can be defined at the linear order as
\(
\mathcal E^{-1}(\rho)
\approx
\frac{1}{1-\epsilon}U^\dagger\rho U,
\) where higher-order noise contributions have been neglected.

\begin{remark} (Irreversible processes). Some quantum channels are irreversible, such as dephasing channels and Markovian or non-Markovian noise processes, which cause information loss and making \(\rho_{\rm in}\) unrecoverable. To address this, we introduce the following theorem for the CQPT, based on Choi matrices.
\end{remark}
    
\begin{theorem}\label{theo:2}
    (Choi-based CQPT): A trainable Choi matrix \( \mathbf{J}_{\mathcal{E}} \) is a \( 4^N \times 4^N \) positive semi-definite matrix satisfying \( {\rm Tr}_{\mathcal{Y}}(\mathbf{J}_{\mathcal{E}}) = \mathbf{I} \). It represents a quantum process \( \mathcal{E} \), which is a CPTP map, if the following condition holds
    \begin{equation}\label{eq:rhofrhoi2}
    \rho_{\rm f} \propto {\rm Tr}_{\mathcal{X}}
    \Big[\Big([\mathcal{E}(\rho_{\rm in})]^\top \otimes \mathbf{I}\Big)\mathbf{J}_\mathcal{E}^{+}
    \Big] = \rho_{\rm in},
    \end{equation}  
    where 
    \( \mathbf{J}_\mathcal{E}^{+} \) is the pseudoinverse of \( \mathbf{J}_\mathcal{E} \), and \( {\rm Tr}_{\mathcal{X}(\mathcal{Y})} \) represents the partial trace over the Hilbert space \( H_{\mathcal{X}(\mathcal{Y})} \). The proportionality symbol accounts for a normalization factor in \( \rho_{\rm f} \).
\end{theorem}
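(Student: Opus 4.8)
The plan is to follow the same structural template as the proof of Theorem~\ref{theo:1}, replacing the Kraus representation with the Choi representation and the exact inverse $\mathcal{E}^{-1}$ by a pseudoinverse taken at the level of the Choi matrix. First I would observe that any admissible trainable Choi matrix $\mathbf{J}_\mathcal{E}$ defines a candidate CPTP map through the Choi--Jamio{\l}kowski isomorphism, $\mathcal{E}_{\mathbf{J}}(\rho)={\rm Tr}_{\mathcal{X}}\big[(\rho^\top\otimes\mathbf{I})\mathbf{J}_\mathcal{E}\big]$, where positivity $\mathbf{J}_\mathcal{E}\ge 0$ supplies complete positivity and the marginal constraint ${\rm Tr}_{\mathcal{Y}}(\mathbf{J}_\mathcal{E})=\mathbf{I}$ supplies trace preservation. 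The trainable matrix represents $\mathcal{E}$ precisely when $\mathcal{E}_{\mathbf{J}}(\rho_{\rm in})=\mathcal{E}(\rho_{\rm in})$ on the probe ensemble, so the task reduces to showing that this equality is captured by the stated fixed-point condition $\rho_{\rm f}=\rho_{\rm in}$.

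For the key steps, I would assume the representation is exact, set $\sigma\equiv\mathcal{E}(\rho_{\rm in})={\rm Tr}_{\mathcal{X}}[(\rho_{\rm in}^\top\otimes\mathbf{I})\mathbf{J}_\mathcal{E}]$, and substitute this into the recovery expression ${\rm Tr}_{\mathcal{X}}[(\sigma^\top\otimes\mathbf{I})\mathbf{J}_\mathcal{E}^{+}]$. Expanding both partial traces in an orthonormal basis and invoking the defining Moore--Penrose identity $\mathbf{J}_\mathcal{E}\mathbf{J}_\mathcal{E}^{+}\mathbf{J}_\mathcal{E}=\mathbf{J}_\mathcal{E}$, I expect the contraction of $\mathbf{J}_\mathcal{E}$ against $\mathbf{J}_\mathcal{E}^{+}$ to collapse to the orthogonal projector $\Pi$ onto the support (range) of $\mathbf{J}_\mathcal{E}$. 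Restricting the probe states $\rho_{\rm in}$ to the subspace on which $\Pi$ acts as the identity then yields ${\rm Tr}_{\mathcal{X}}[(\sigma^\top\otimes\mathbf{I})\mathbf{J}_\mathcal{E}^{+}]\propto\rho_{\rm in}$. The proportionality constant is fixed a posteriori by demanding ${\rm Tr}\,\rho_{\rm f}=1$; it cannot be normalized away in advance because the pseudoinverse map is not trace-preserving, and indeed not CPTP, whenever $\mathcal{E}$ is irreversible, which is exactly the regime Theorem~\ref{theo:2} is meant to cover.

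The hardest part will be the pseudoinverse step for rank-deficient $\mathbf{J}_\mathcal{E}$, i.e. for genuinely irreversible channels: there $\mathbf{J}_\mathcal{E}^{+}$ inverts $\mathbf{J}_\mathcal{E}$ only on its range, so I must verify that the transpose-and-partial-trace structure routes every relevant matrix element of $\sigma$ into that range; otherwise information erased by $\mathcal{E}$ is unrecoverable and the fixed-point identity fails, as it must for a fully depolarizing channel. The concrete obstacle is that the partial-trace contraction pairs the $\mathcal{Y}$ block of $\mathbf{J}_\mathcal{E}$ with the $\mathcal{X}$ block of $\mathbf{J}_\mathcal{E}^{+}$, i.e. it is a link-product-type contraction rather than an ordinary matrix product $\mathbf{J}_\mathcal{E}\mathbf{J}_\mathcal{E}^{+}$, so the Moore--Penrose identity cannot be applied naively. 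I would therefore make the reshuffling between the Choi matrix and the induced transfer (superoperator) map explicit and show that, on the support, the pseudoinverse at the Choi level coincides with the pseudoinverse of the transfer map. Establishing this support/range compatibility---and thereby delimiting the class of processes for which the reconstruction is exact rather than merely optimal with respect to the infidelity cost function of Eq.~\eqref{eq:cost}---is the crux of the argument.
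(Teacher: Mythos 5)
Your plan follows the same route as the paper's own proof in Appendix~B: assume the representation is exact, substitute $\mathcal{E}(\rho_{\rm in})$ into the recovery expression, pass to a vectorized (transfer-matrix) picture, and let the Moore--Penrose identity collapse the composition to a projector onto the support. You have also put your finger on exactly the right difficulty: the contraction in Eq.~(\ref{eq:rhofrhoi2}) pairs the $\mathcal{Y}$ (output) index of $\mathbf{J}_{\mathcal{E}}$ with the $\mathcal{X}$ (input) index of $\mathbf{J}_{\mathcal{E}}^{+}$, i.e.\ it is a link-product composition, so the product $\mathbf{J}_{\mathcal{E}}^{+}\mathbf{J}_{\mathcal{E}}$ never literally appears. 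The genuine gap is that the bridge you propose---``on the support, the pseudoinverse at the Choi level coincides with the pseudoinverse of the transfer map''---is false, so the plan cannot be completed as written. The Choi and transfer matrices are related by the reshuffling $[\mathbf{J}^{R}]_{(mn),(ij)}=\mathbf{J}_{(im),(jn)}$, and reshuffling does not commute with Moore--Penrose inversion: $(\mathbf{J}^{R})^{+}\neq(\mathbf{J}^{+})^{R}$ in general. Concretely, take any unitary channel $\mathcal{E}(\rho)=U\rho U^{\dagger}$ in dimension $d=2^{N}$. Its Choi matrix is rank one, $\mathbf{J}_{\mathcal{E}}=|\Phi_U\rangle\langle\Phi_U|$ with $|\Phi_U\rangle=(\mathbf{I}\otimes U)\sum_i|ii\rangle$, so $\mathbf{J}_{\mathcal{E}}^{+}=\mathbf{J}_{\mathcal{E}}/d^{2}$, and the map induced by $\mathbf{J}_{\mathcal{E}}^{+}$ through the contraction is just $\mathcal{E}/d^{2}$ again; Eq.~(\ref{eq:rhofrhoi2}) then yields $\rho_{\rm f}\propto U^{2}\rho_{\rm in}U^{\dagger 2}\neq\rho_{\rm in}$ (for the phase gate $U={\rm diag}(1,i)$ and $\rho_{\rm in}=|+\rangle\langle+|$ one obtains $|-\rangle\langle-|$). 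The transfer matrix $\mathbf{J}_{\mathcal{E}}^{R}$ of the same channel is unitary, hence invertible, and its inverse does recover $\rho_{\rm in}$; the two pseudoinverses therefore differ, and since a unitary channel's transfer matrix has full support, no restriction of the probe ensemble repairs the discrepancy.

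You should also know that this is not a defect of your attempt relative to the paper: the paper's proof hides the same gap, and your proposal is in fact more candid about it. The step from (\ref{eq:d}) to (\ref{eq:e}) asserts ${\rm vec}[\mathcal{E}(\rho_{\rm in})]=\mathbf{J}_{\mathcal{E}}^{\top}\,{\rm vec}[\rho_{\rm in}^{\top}]$, which silently identifies the Choi matrix with a transpose of the transfer matrix, whereas the true relation is the reshuffle $R$ (an index permutation that no combination of transposes reproduces); already for the identity channel the left-hand side is ${\rm vec}[\rho_{\rm in}]$ while the right-hand side is ${\rm Tr}[\rho_{\rm in}]\,{\rm vec}[\mathbf{I}]$. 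Likewise, Eq.~(\ref{eq:vecL1}) invokes ``$\mathbf{P}\mathbf{X}=\mathbf{X},\ \forall \mathbf{X}$,'' which forces $\mathbf{P}=\mathbf{I}$ and hence a full-rank $\mathbf{J}_{\mathcal{E}}$---excluded precisely for the unitary gates (rank-one Choi matrix) on which the method is benchmarked. Your additional hypothesis that the probe states lie where the projector acts as the identity is the correct missing ingredient, but it must be imposed at the transfer-matrix level: a sound version of the statement defines the recovery operator as $\big((\mathbf{J}_{\mathcal{E}}^{R})^{+}\big)^{R}$ (reshuffle, pseudoinvert, reshuffle back) and requires ${\rm vec}[\rho_{\rm in}]$ to lie in the support of $\mathbf{J}_{\mathcal{E}}^{R}$. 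With that amendment both your argument and the paper's computation go through; with the matrix pseudoinverse $\mathbf{J}_{\mathcal{E}}^{+}$ as literally written, the claimed identity fails, and neither your proposal nor the paper's proof establishes Theorem~\ref{theo:2}.
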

See Figure~\ref{fig:1} (c, d) for a graphical illustration of the Choi-based quantum process and Theorem \ref{theo:2}.

\begin{proof}
    See details in Appendix B.  
\end{proof}

\begin{remark}\label{rmCost} (Cost function and training process).
Similar to the previous section, the cost function $C(\mathbf{J}_{\mathcal{E}})$ is defined through the infidelity
\begin{equation}\label{eq:cost_2}
C(\mathbf{J}_\mathcal{E}) = 1 - \int_{W \in \rm{Haar}} \left| \mathrm{Tr} \left[
\left(\mathcal{E}(\rho_{\mathrm{in}})^\top \otimes \rho_{\mathrm{in}} \right)
\mathbf{J}_\mathcal{E}^{+}
\right] \right|^2
\end{equation}
    where $\rho_{\rm in} = W|\bm 0\rangle\langle\bm0|W^\dagger$ for an arbitrary 
    random unitary $W$.
\end{remark}

\begin{proof}
    See details in Appendix B.
\end{proof}

The training process begins by initializing an \( N \)-qubit system in the Hilbert space \( H_{\mathcal{X}} \) in the state \( |00\cdots 0\rangle \). The operator \( W \) is applied, followed by the quantum channel \( \mathcal{E} \) and the transpose operator to obtain \( \mathcal{E}(\rho_{\rm in})^\top \). The same \( \rho_{\rm in} \) is prepared in another \( N \)-qubit system in the Hilbert space \( H_{\mathcal{Y}} \). The joint system \( H_{\mathcal{X}} \otimes H_{\mathcal{Y}} \) then undergoes the application of \( \mathbf{J}_\mathcal{E}^+ \). Finally, \( W^\dagger \) is applied to the \( H_{\mathcal{Y}} \) subspace, and the output state is measured in the \( |\bm 0\rangle \) basis, resulting in the probability \( p(00\cdots 0) \). To optimize \( \mathbf{J}_{\mathcal{E}} \), we minimize \( C(\mathbf{J}_\mathcal{E}) \) until it converges, ensuring that \( \mathbf{J}_{\mathcal{E}} \) accurately represents the Choi matrix of the quantum channel \( \mathcal{E} \). Further details are provided in Figure~\ref{fig:11} in Appendix B, where we present a conceptual, circuit-based schematic to clarify the logical flow of the CQPT algorithm. We emphasize that this schematic does not correspond to a physically implementable quantum circuit at present, due to several technical limitations. In particular, operations such as the exact matrix transpose cannot be directly realized as quantum gates, and the controlled implementation of general noise channels on quantum hardware remains challenging \cite{Brand2024,Liu2025simulationofopen,PRXQuantum.5.020332}. Therefore, the scheme here is intended solely as a conceptual illustration to aid understanding of the algorithmic structure, rather than as an experimental prescription.

\section{Numerical results}
\subsection{Benchmarking Haar random unitary quantum gates}
In this subsection, we apply the CQPT to benchmark Haar random unitary quantum gates. Given a process $\mathcal{E} \equiv U_{\rm Haar}$, corresponding to a Haar random unitary gate \cite{mezzadri2007generaterandommatricesclassical}, we construct trainable Kraus operators $\Bbbk = (K_1, K_2, \dots, K_\mathsf{k})^\top$.
We have
$\rho_\mathcal{E}= U_{\rm Haar} \rho_{\rm in} U_{\rm Haar}^{\dagger}$, and 
$\rho_\Bbbk = \sum_{l=1}^{\mathsf{k}} K_l \rho_{\rm in} K_l^{\dagger}$, where $\rho_{\rm in} = W|\bm 0\rangle\langle\bm 0|W^\dagger$.
Following Theorem \ref{theo:1}, the final state is given by
\begin{equation}\label{eq:rhof}
\rho_{\rm f} = U_{\rm Haar}^{\dagger} \left( \sum_{l=1}^{\mathsf{k}} K_l \rho_{\rm in} K_l^{\dagger} \right) U_{\rm Haar}.
\end{equation}

To optimize \( \Bbbk \), we generate \( 6^N \) random input states \( \rho_{\rm in} \) and use a single measurement, \( M = |\bm{0}\rangle\langle \bm{0}| \), for training \cite{PhysRevLett.130.150402,SurawyStepney2022projectedleast}. For testing, we generate another set of \( 6^N \) random input states. In general, both the training and testing sets can be chosen arbitrarily. The cost function in Equation.~(\ref{eq:cost}) is evaluated by approximating the integral with a summation over the training set. \( \Bbbk \) is then iteratively updated using Equation~(\ref{eq:optimizer}) until convergence.  

\begin{figure}[t]
    \centering
    \includegraphics[width=\columnwidth]{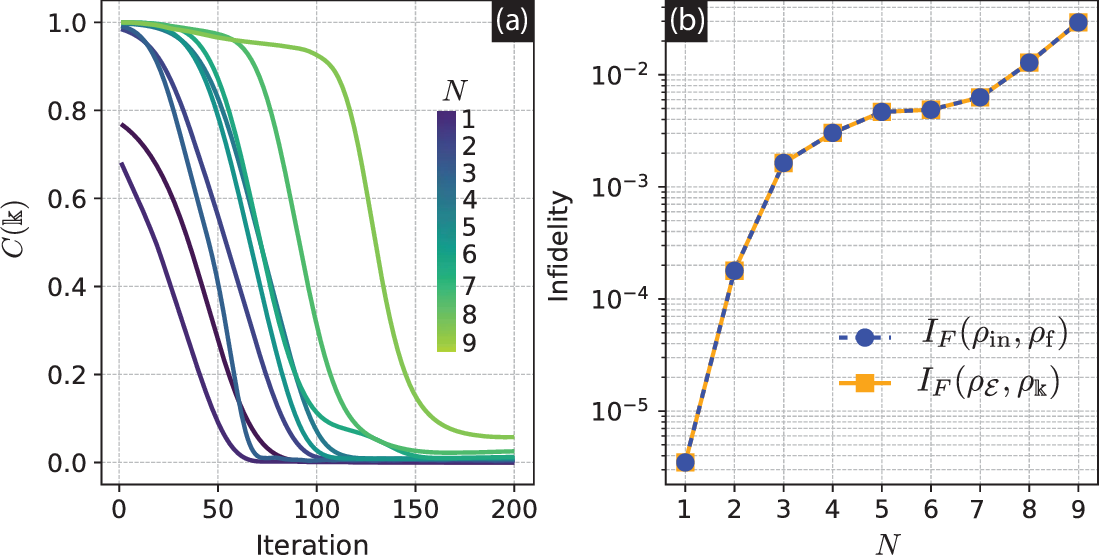}
    \caption{\textbf{Benchmarking Haar random unitary gates.}
    (a) Convergence of the cost function \(C(\Bbbk)\) over iterations.  
    (b) Average infidelity as a function of the qubit numbers \(N\).  
    }
    \label{fig:2}
\end{figure}

\begin{figure*}[t]
    \centering
    \includegraphics[width=0.8\textwidth]{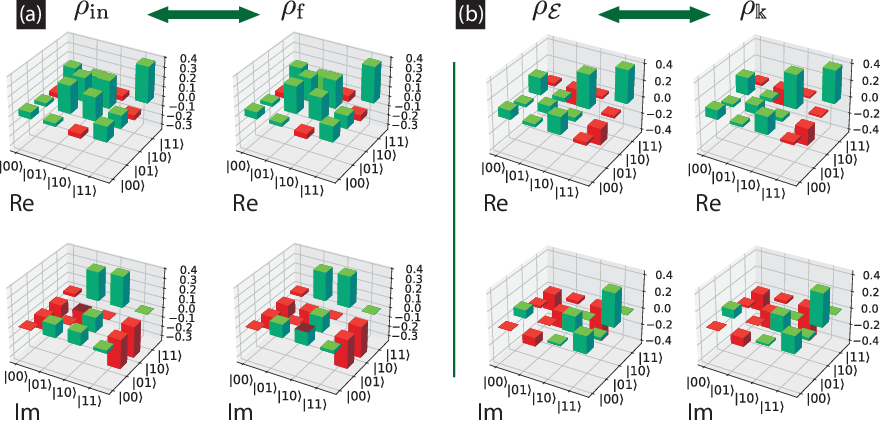}
    \caption{\textbf{Benchmarking Haar random unitary gates.}
    (a) Comparison of the initial and final states, \(\rho_{\rm in}\) and \(\rho_{\rm f}\), for $N = 2$, where $\rho_{\rm in}$ is selected from the testing set.
    Here, Re and Im denote the real and imaginary parts, respectively. Green indicates positive values, while red represents negative values.
    (b)
    Comparison of intermediate quantum states \(\rho_\mathcal{E}\) and \(\rho_\Bbbk\) to evaluate the transformation accuracy.  
    }
    \label{fig:3}
\end{figure*}

In our implementation, we use the number of Kraus terms \(\mathsf{k} = 2^N \ll 4^N\) (corresponding to a full-rank process). All numerical simulations are performed on a workstation equipped with an Intel Core i9-10920X processor (128 GB of RAM) and an NVIDIA GPU A6000.

Figure ~\ref{fig:2}(a) shows the cost function as a function of the iteration number for different \( N \). In general, the cost function gradually decreases, which indicates effective learning of the quantum channel. As \( N \) increases, the convergence rate slows down and more iterations are required to reach convergence.

Figure~\ref{fig:2}(b) compares the average infidelity across different \( N \). The infidelity between quantum states \(\rho\) and \(\sigma\) is defined as 
\begin{equation}
    I_F(\rho,\sigma) = 1 - {\rm Tr}\big(\sqrt{\sqrt{\rho}\sigma\sqrt{\rho}}\big),
\end{equation}  
where smaller values indicate higher similarity.
We evaluate the average infidelities \(I_F(\rho_{\rm in}, \rho_{\rm f})\) and \(I_F(\rho_{\mathcal{E}}, \rho_{\Bbbk})\) over \(6^N\) testing states. The results show that both infidelities closely match and increase with \(N\), suggesting that while the transformation remains accurate, deviations accumulate as \(N\) grows. Additionally, the numerical imprecision creates minor but unavoidable errors, which gradually increase the infidelity.

\begin{figure}[b]
    \centering
    \includegraphics[width= \columnwidth]{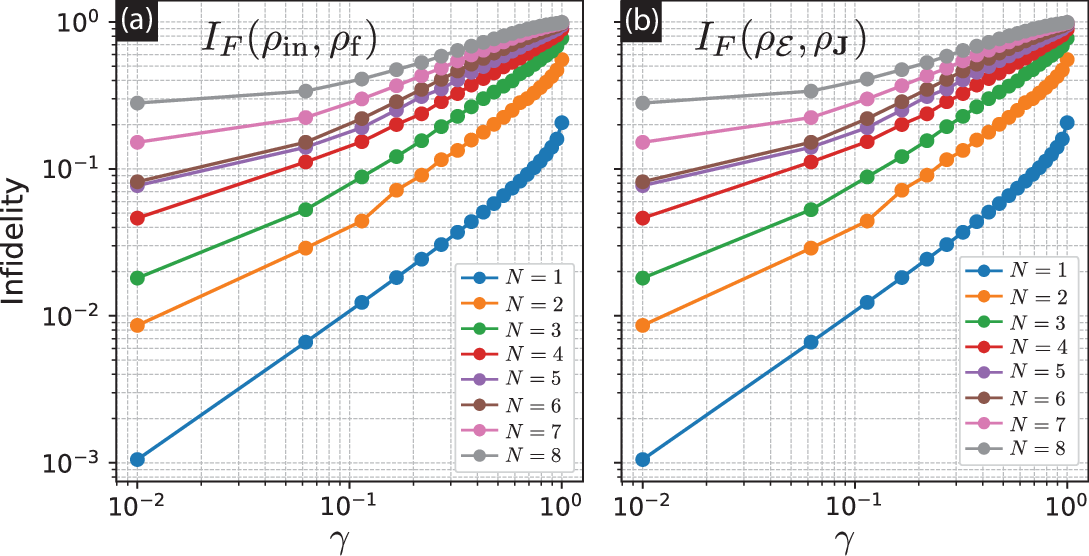}
    \caption{\textbf{CQPT for dephasing noise process.}
    (a) Average infidelity \(I_F(\rho_{\rm in}, \rho_{\rm f})\) versus \(\gamma\). (b) Average infidelity \(I_F(\rho_\mathcal{E}, \rho_{\mathbf{J}})\) versus \(\gamma\), showing the effect of noise on the optimization performance. 
    }
    \label{fig:4}
\end{figure}

Figure~\ref{fig:3} compares the density matrices for \(N = 2\) as an illustration. A random state \(\rho_{\rm in}\), chosen from the testing set, and its evolved state \(\rho_{\rm f}\) are shown in Figure~\ref{fig:3}(a), demonstrating a consistent quantum transformation. Likewise, Figure~\ref{fig:3}(b) compares \(\rho_\mathcal{E}\) and \(\rho_\Bbbk\), showing strong agreement, which confirms that the optimized quantum channel \(\Bbbk\) effectively reconstructs the target process \(\mathcal{E}\). The close match in both cases validates the accuracy of the reconstruction and the reliability of the method.

\subsection{Dephasing noise process}
Dephasing noise is a process that progressively erases coherence between quantum states due to environmental interactions. In multi-qubit systems, each qubit dephases independently, primarily affecting phase information while preserving state populations. The dephasing process for a qubit \(j\) is given by  
\begin{equation}  
    \mathcal{E}_j^{\rm deph}(\rho) = \frac{1 + \sqrt{1 - \gamma}}{2} \rho + \frac{1 - \sqrt{1 - \gamma}}{2} \sigma_z \rho \sigma_z,  
\end{equation}  
where \(\rho\) is the qubit state, \(\sigma_z\) is the Pauli-Z operator, and \(\gamma \in [0, 1]\) is the noise strength. At \(\gamma = 0\), there is no noise, and as \(\gamma\) increases, dephasing suppresses coherence and drives the state toward a maximally mixed state.

The total dephasing process on an $N$-qubit system is given by the product of the dephasing operations on each qubit
\begin{equation}
    \mathcal{E}_{\rm{total}}(\rho) = \bigotimes_{j=1}^{N} \mathcal{E}_j^{\rm deph}(\rho).
\end{equation}

We employ the Choi-based CQPT method. The process begins with the initial state \(\rho_{\rm in} = W|\bm{0}\rangle\langle\bm{0}| W^\dagger\), which evolves under the map \(\rho_{\mathcal{E}} = \mathcal{E}_{\rm total}(\rho_{\rm in})\). To reconstruct \(\mathcal{E}_{\rm total}\), we optimize a variational Choi matrix \(\mathbf{J}_{\mathcal{E}}\), which is initialized randomly. 
The pseudoinverse \(\mathbf{J}_{\mathcal{E}}^{+}\) is computed numerically at each optimization step using the Moore-Penrose pseudoinverse. For unitary processes, this pseudoinverse reduces to the ordinary matrix inverse. A conceptual circuit-based illustration of the CQPT framework is provided in Fig.~\ref{fig:11} and Appendix~B.
The training procedure involves generating \(6^N\) initial states \(\rho_{\rm in}\) and evaluating the cost function as defined in Equation~(\ref{eq:cost_2}). When the cost function \(C(\mathbf{J}_\mathcal{E}) \) reaches its minimum, we obtain the reconstructed state \(\rho_{\mathbf{J}} \approx \rho_\mathcal{E} \), where
\(
\rho_{\mathbf{J}} = {\rm Tr}_{\mathcal{X}} \Big[\big(\rho_{\rm in}^\top \otimes \mathbf{I} \big) \mathbf{J}_{\mathcal{E}}\Big].
\)
The numerical results are obtained using classical matrix operations as a proof of concept, rather than through an explicit circuit-level realization.

We evaluate the efficiency at different noise levels by calculating the average infidelity \(I_F(\rho_{\rm in}, \rho_{\rm f})\) and \(I_F(\rho_{\mathcal{E}}, \rho_{\mathbf{J}})\) over \(6^N\) random states from the testing set. Figure~\ref{fig:4} shows that for small noise parameters \(\gamma\), the optimized \(\mathbf{J}_\mathcal{E}\) accurately reconstructs the dephasing process. As \(\gamma\) increases, the recovery becomes less precise, leading to a larger deviation between the reconstructed and target states. Despite this, even for larger \(N\), the process remains effectively learned, although it becomes more challenging. These results indicate that higher noise levels and larger systems increase the complexity of the learning process, thereby reducing the accuracy of process tomography.

We further compare the evolved density matrices \(\rho_{\mathcal{E}}\) and \(\rho_{\mathbf{J}}\) for \(N = 2\) at low noise (\(\gamma = 0.01\)) and high noise (\(\gamma = 0.95\)). As shown in Figure~\ref{fig:5}(a) at low noise levels, the two states match closely for both real and imaginary parts. However, as the noise level increases, as in Figure~\ref{fig:5}(b), the difference between \(\rho_{\mathcal{E}}\) and \(\rho_{\mathbf{J}}\) becomes more pronounced, which can be seen clearly in the imaginary part, indicating reduced accuracy in the reconstruction process.

\begin{figure*}[t]
    \centering
    \includegraphics[width=0.8\textwidth]{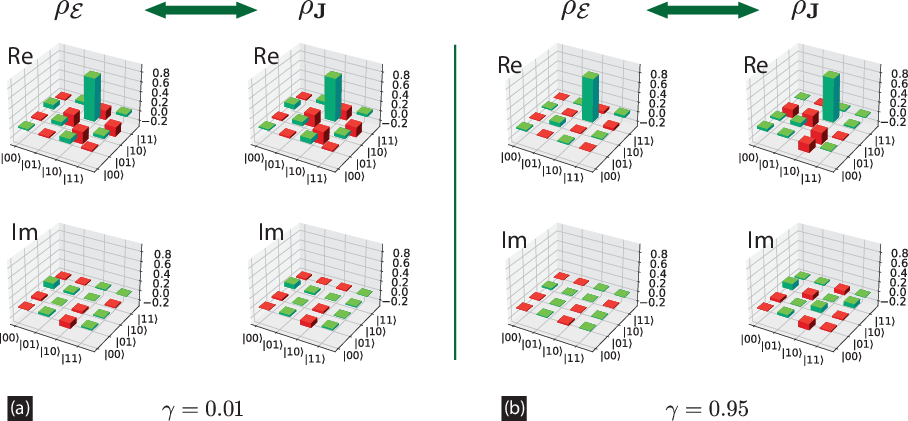}
    \caption{\textbf{CQPT for dephasing noise process.}
    (a) Comparison of quantum states \(\rho_\mathcal{E}\) and \(\rho_{\mathbf{J}}\) at \(\gamma = 0.01\), demonstrating a close match. (b) Comparison at \(\gamma = 0.95\), highlighting the growing deviation between the quantum states at higher noise.
    }
    \label{fig:5}
\end{figure*}

\subsection{Depolarizing and amplitude-damping processes}
\label{subsec:noise_models}

We further benchmark the performance of CQPT under the depolarizing noise and amplitude-damping noise. These channels capture distinct physical decoherence mechanisms and are widely used as standard test cases in quantum information.

\textbf{Depolarizing channel.} The $N$-qubit depolarizing channel is defined as
\begin{equation}
\mathcal{E}^{\mathrm{depol}}(\rho)
=
(1-p)\rho
+
p\,\frac{\mathbf{I}}{2^N},
\label{eq:depolarizing}
\end{equation}
where $p\in[0,1]$ denotes the depolarizing strength, $\rho$ is the input quantum state.

\textbf{Amplitude-damping noise.}
Amplitude-damping noise models irreversible energy relaxation processes. The 
amplitude-damping channel applied to the $j$th qubit with rate $\gamma\in[0,1]$ is defined by
\begin{equation}
\mathcal{E}_j^{\mathrm{damp}}(\rho)
=
\sum_{k=0}^{1} K_k \rho K_k^{\dagger},
\end{equation}
with Kraus operators
\(
K_0 =
\begin{pmatrix}
1 & 0 \\
0 & \sqrt{1-\gamma}
\end{pmatrix},
\;
K_1 =
\begin{pmatrix}
0 & \sqrt{\gamma} \\
0 & 0
\end{pmatrix},
\)
which satisfy $\sum_k K_k^{\dagger}K_k=\mathbf{I}$.  
For an $N$-qubit system, the total amplitude damping channel gives
\begin{equation}
\mathcal{E}^{\mathrm{damp}}_{\rm total}(\rho)
=
\bigotimes_{j=1}^{N} \mathcal{E}^{\mathrm{damp}}_{j}(\rho) .
\end{equation}

Figure~\ref{fig:5b} shows the CQPT performance for depolarizing and amplitude-damping processes, with the average infidelity $I_F(\rho_{\mathrm{in}}, \rho_f)$ plotted versus the depolarizing strength $p$ in panel (a) and the amplitude-damping rate $\gamma$ in panel (b). In both cases, $I_F(\rho_{\mathrm{in}}, \rho_f)$ increases monotonically with the noise strength, similar to the depolarizing case discussed above, reflecting the increased difficulty of channel reconstruction at higher noise levels. For fixed noise parameters, the infidelity increases with the number of qubits $N$, indicating the growing complexity of multi-qubit non-unitary dynamics.

\begin{figure}[b]
    \centering
    \includegraphics[width= \columnwidth]{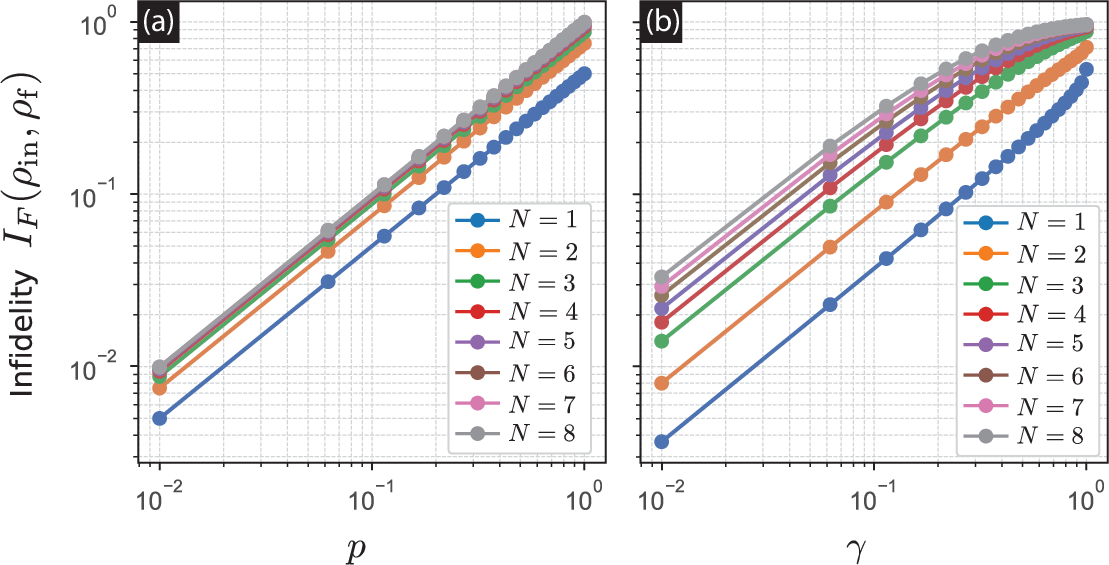}
    \caption{\textbf{CQPT for depolarizing and amplitude-damping process.}
    (a) Average infidelity \(I_F(\rho_{\rm in}, \rho_{\rm f})\) versus the depolarizing strength \(p\). Average infidelity \(I_F(\rho_{\rm in}, \rho_{\rm f})\) versus the amplitude-damping rate \(\gamma\). 
    }
    \label{fig:5b}
\end{figure}

\subsection{Time-homogeneous and time-inhomogeneous noises}
In quantum systems, noise can be categorized by how it changes over time. The following cases describe two types of dephasing noise, defined through
\begin{equation}
\gamma = \left\{
\begin{array}{ll}
1 - e^{-2\beta t}, & \textrm{time-homogeneous} \\[1ex]
1 - e^{-\beta t^2}, & \textrm{time-inhomogeneous}
\end{array}
\right.
\end{equation}
where \( t \) denotes time, and \( \beta \) is the dephasing rate. The noise strength \( \gamma \) increases from 0 to 1 as \( t \) changes from 0 to \(\infty \).  

The first case represents time-homogeneous noise, where the noise strength increases exponentially over time. The noise rate remains constant, leading to a Markovian process where the noise evolution depends only on the current state. Such behavior is typical of many natural decay processes, such as thermal dephasing or spontaneous emission, where the loss of coherence occurs steadily over time.  

In contrast, the second case describes time-inhomogeneous noise. Here, the noise follows a quadratic time dependence, growing slowly at first and then accelerating as time evolves. This results in a non-Markovian process, where the rate of dephasing changes over time. This behavior is typical of systems with diffusive noise or time-dependent fluctuations.

Numerically, we examine a 2-qubit system and set \(\beta = 0.1\). We first reconstruct the time-dependent states \(\rho_\mathcal{E}(t)\) and \(\rho_{\mathbf{J}}(t)\) for both noise cases using the CQPT. We then calculate the expectation value \(\langle \sigma_x(t)\rangle\) of the first qubit as
\begin{eqnarray}
\langle \sigma_x(t) \rangle_\mathcal{E} &=& \mathrm{Tr} \left[ \big(\sigma_x\otimes I\big) \rho_\mathcal{E}(t) \right], \\ 
\langle \sigma_x(t) \rangle_\mathbf{J} &=& \mathrm{Tr} \left[ \big(\sigma_x\otimes I\big) \rho_\mathbf{J}(t) \right].
\end{eqnarray}
The purpose of this example is to demonstrate the applicability of the CQPT framework to multi-qubit non-unitary channels under different dynamical regimes.

\begin{figure}[t]
    \centering
    \includegraphics[width=\columnwidth]{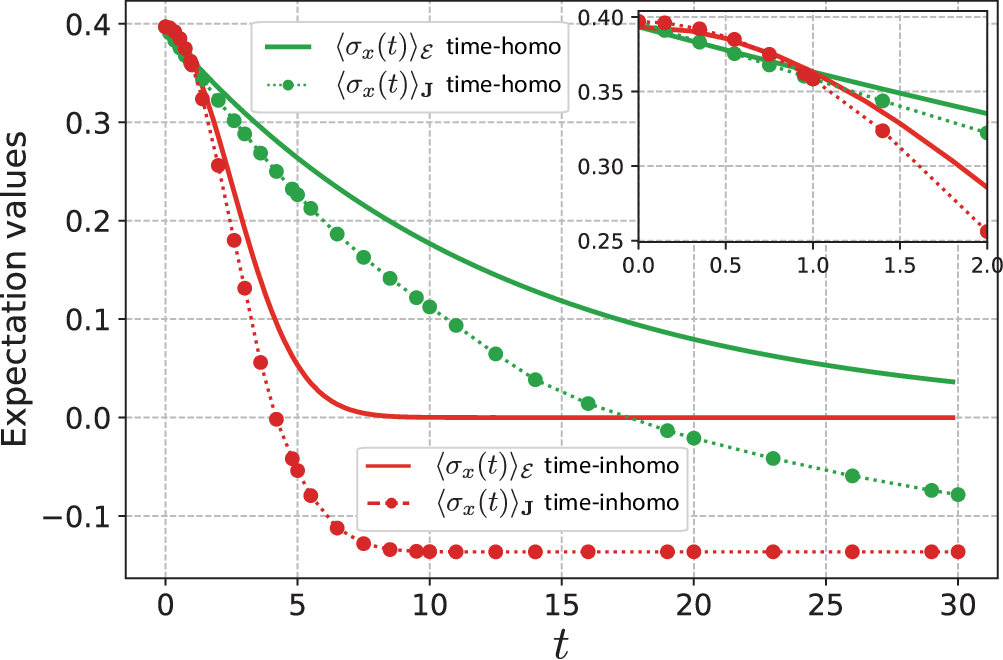}
    \caption{\textbf{Expectation values}. Plot of the expectation values
$\langle \sigma_x(t) \rangle_\mathcal{E}$ and
$\langle \sigma_x(t) \rangle_\mathbf{J}$ for time-homogeneous and time-inhomogeneous noise processes. The plot compares the expectation values calculated from the evolved state $\rho_{\mathcal{E}}$ and from the reconstructed state $\rho_{\mathbf{J}}$. Inset: plot of the small-time regime, highlighting the close agreement between the true and reconstructed expectation values.
}
    \label{fig:6}
\end{figure}

Figure~\ref{fig:6} displays the expectation values over time for both time-homogeneous and time-inhomogeneous noise, comparing the true evolved state with the reconstructed process. For time-homogeneous noise, the expectation values decay exponentially, indicating a constant-rate decoherence typical of Markovian processes like thermal dephasing. In contrast, for time-inhomogeneous noise, the decay starts slowly and then accelerates quadratically, reflecting the presence of a non-Markovian process, such as diffusive noise or time-dependent fluctuations.

In the short-time regime, shown in the inset of Fig.~\ref{fig:6}, the reconstructed expectation values closely follow the theoretical predictions, demonstrating that the CQPT framework accurately captures the underlying dynamics when noise effects remain moderate. At longer times, deviations between the reconstructed and true expectation values become more pronounced, particularly for time-inhomogeneous noise, where the accumulated noise strength significantly degrades the recoverability of the quantum state. This behavior is expected, as strong and prolonged decoherence reduces the available information for reliable process reconstruction.

Despite this limitation, the results indicate that the proposed CQPT method can reliably reconstruct quantum dynamics at a short-time limit, providing a useful proof of principle for quantum process characterization under realistic conditions.

\section{Comparison}
We compare our approach with previous studies \cite{volya2024fastquantumprocesstomography,PhysRevLett.130.150402} to assess its efficiency. These previous works utilize a Riemannian gradient-descent-based quantum process tomography method, which we also adopt for optimization. However, our training process differs significantly.  

Previous studies, which we refer to as measurement-based QPT (MQPT), directly compare the measurement outcomes between \( \rho_\mathcal{E} \) and \( \rho_\Bbbk \) 
obtained from different processes, using a cost function defined as  
\begin{equation}\label{eq:mqpt}  
    C(\Bbbk) = \sum_{ij} \Big( {\rm Tr} \big[ M_j (\rho^{i}_{\mathcal{E}} - \rho^{i}_{\Bbbk}) \big] \Big)^2,  
\end{equation}  
where \( i \in [1, \mathsf{n}] \) labels \( \mathsf{n} \) different input states \( \rho_{\rm in}^1, \dots, \rho_{\rm in}^\mathsf{n} \), and \( j \in [1, \mathsf{p}] \) indexes the \( \mathsf{p} \) elements of the POVM.
In contrast, our approach, referred to as compilation-based QPT (CQPT), uses a quantum compilation process for QPT, rather than performing direct state comparison. This key difference allows a more structured and potentially more efficient reconstruction of quantum processes.

We analyze an \( N \)-qubit system using a Haar random unitary process for comparison. The dataset consists of \( \mathsf{n} = 6^N \) initial states \( \rho_{\rm in} \). The MQPT method requires applying a measurement operator \( M \) with \( \mathsf{p} = 6^N \) measurements to each transformed state \( \rho_{\mathcal{E}} \) and \( \rho_{\Bbbk} \) \cite{volya2024fastquantumprocesstomography,PhysRevLett.130.150402}. This results in a total of \( 6^N \times 6^N \) measurements per learning iteration.
In contrast, the CQPT  applies \( \mathcal{E} \) and \( \Bbbk \) to \( \rho_{\rm in} \) and requires only a single measurement per state, reducing the total to \( 6^N \) measurements per iteration.

Correspondingly, the total measurement time for the MQPT is given by
\begin{equation}
    T_{\rm{MQPT}} = c_1 \cdot (6^{2N}),
\end{equation}
which grows exponentially with \( N \), where \( c_1 \) is the time per measurement. 
For the CQPT, the measurement time scales as
\begin{equation}
    T_{\rm{CQPT}} = c_2 \cdot 6^N,
\end{equation}
where \( c_2 \) is the measurement time per state. This follows a smaller exponential scaling, thereby reducing the overall computational cost.

\begin{figure*}[t]
    \centering
    \includegraphics[width=0.7\linewidth]{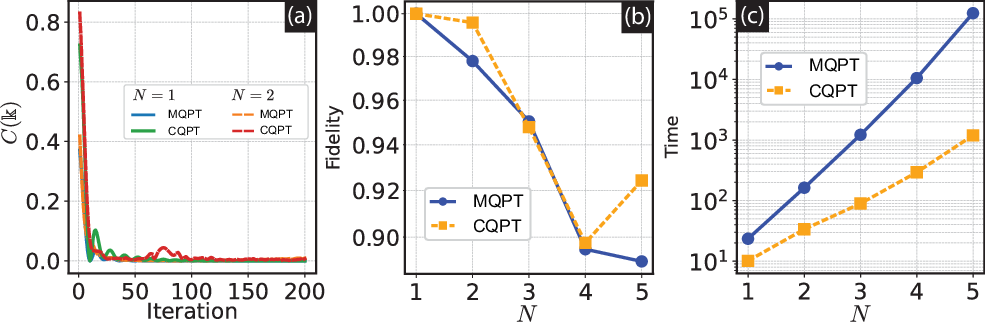}
    \caption{\textbf{Comparing the MQPT and CQPT methods}.
   (a) Cost function \( C(\Bbbk) \) versus the number of iterations for \( N = 1 \) and \( N = 2 \) qubits, comparing the two methods. For the MQPT, \( C(\Bbbk) \) is given by Eq.~(\ref{eq:mqpt}), and for the CQPT, \( C(\Bbbk) \) is given by Eq.~(\ref{eq:ck_our}).
   (b) Plot of the fidelity as a function of the number of qubits \( N \) for both methods.
    (c) Execution time as a function of \( N \).
    }
    \label{fig:7}
\end{figure*}

Figure~\ref{fig:7} presents a direct comparison between the MQPT and CQPT methods and shows that the efficiency advantage of the CQPT becomes more pronounced as the system size \(N\) increases. In the MQPT, the number of required measurements grows rapidly, leading to a substantial increase in time complexity. In contrast, the CQPT avoids redundant measurements by design, which significantly improves its scalability for larger quantum systems. In particular, although both the MQPT and CQPT employ Riemannian gradient descent for optimization, the convergence behavior differs markedly. As shown in Fig.~\ref{fig:7}(a), the number of iterations required for convergence grows more slowly in the CQPT, resulting in faster convergence. Moreover, for system sizes up to \(N = 5\), the CQPT achieves fidelity that is comparable to, or even higher than, that of MQPT, as illustrated in Fig.~\ref{fig:7}(b).

Finally, we compare the overall convergence time. In the CQPT, it is dominated by the optimization complexity of the process representation, whereas in the MQPT it is further amplified by the exponential growth of measurement counts.
As a result, the CQPT significantly reduces the execution time compared to the MQPT as shown in Figure~\ref{fig:7}(c). Moreover, as the number of qubits increases, the execution time in the CQPT grows at a slower rate, highlighting its efficiency.

For the memory complexity, the MQPT requires storing both the process representation $\mathcal{O}(2^N\cdot 2^N)$ and extensive measurement outcomes, leading to $\mathcal{O}(4^N \cdot 6^N)$. The CQPT, by contrast, stores only the trainable operators $\mathcal{O}(4^N)$ and minimal final state data, achieving $\mathcal{O}(4^N)$. This efficiency, free from the $6^N$ dependent term, makes the CQPT significantly more memory-efficient and practical for large-scale quantum process tomography.

In the MQPT, the cost function in Eq.~(\ref{eq:mqpt}) depends on discrepancies between the full POVM measurement statistics of \( \rho_\mathcal{E} \) and \( \rho_\Bbbk \). This leads to a highly noisy optimization landscape, and as a result, the optimization typically exhibits slower convergence. In contrast, the CQPT formulates the learning problem as a compilation task using single-shot measurements. This formulation effectively compresses the optimization landscape into a lower-dimensional manifold governed by the trainable Kraus operators or the Choi matrix.

\section{Conclusion}
We introduced a compilation-based quantum process tomography (CQPT) framework optimized using the Riemannian gradient descent. This approach models an unknown quantum process through a set of trainable operators, such as the Kraus operators or the Choi matrix, which enables efficient extraction of process information. By enforcing that the output state returns to its initial reference state, it avoids the need for full tomography or full-rank measurement bases, significantly reducing both measurement complexity and computational cost.  

The CQPT improves the runtime efficiency compared to previous methods using the same Riemannian gradient descent \cite{volya2024fastquantumprocesstomography,PhysRevLett.130.150402}. A key feature is that it uses single-shot measurements, which removes the requirement of full-base measurements, i.e, tensor products of eigenvalues of Pauli matrices, substantially reduces the overall measurement cost. As a result, the CQPT is well suited for larger quantum systems and near-term quantum hardware with limited coherence times. 
Furthermore, the CQPT improves precision by reducing statistical and readout errors. The framework applies to both unitary and noisy quantum processes, making it suitable for quantum gates validation, circuits benchmarking, and noise characterization.

However, the CQPT has some limitations. While it can handle noisy quantum processes, strong noise can still reduce reconstruction accuracy. So far, state preparation and measurement (SPAM) errors remain a challenge in the QPT. In the CQPT, single-shot measurements help reduce the measurement-related errors, however, state preparation errors can still affect the training process. In future works, this issue may be addressed using measurement-error mitigation techniques \cite{PhysRevResearch.6.013187}, or learning-based calibration strategies for state preparation \cite{AlcaldePuente2025learningfeedback,Wang2025}.

Despite these challenges, the CQPT provides an efficient and accurate approach to the QPT, making it a valuable tool for advancing quantum technologies.

\section*{Appendix A: Implementation of Kraus-based CQPT (Theorem \ref{theo:1})}
The Kraus-based CQPT process begins by preparing a set of training quantum states, \(\rho_{\rm in}^{i} = W_i \rho_0 W_i^\dagger\) for \(i \in [1, \mathsf{n}]\), where \(\rho_0\) is used as the reference state. The process then applies the Kraus operators to evolve the initial state, followed by an inverse operation, \(\mathcal{E}^{-1}\). Finally, applying \(W_i^\dagger\) restores the reference state \(\rho_0\). This approach is particularly effective when \(\mathcal{E}\) is invertible, such as in the unitary processes. 

For example, consider a random unitary process \(\mathcal{E} = U\). We can also choose \(\Bbbk = K\), which is a unitary operator. Theorem \ref{theo:1} gives 
\begin{eqnarray}
\rho_0 &\to& W_i \rho_0 W_i^\dagger 
\to K \big(W_i \rho_0 W_i^\dagger\big) K^\dagger \nonumber\\
&\to& U^\dagger \Big[ K\big(W_i \rho_0 W_i^\dagger\big) K^\dagger \Big] U \nonumber\\
&\to& W_i^\dagger \Big[ U^\dagger \big[ K(W_i \rho_0 W_i^\dagger) K^\dagger\big] U \Big] W_i.
\end{eqnarray}

The training is performed until \(K = U\), at which point the final state becomes \(\rho_0\). Without loss of generality, we can fix \(\rho_0\) to be either a ground state or the computational basis state \(|\bm{0}\rangle\). If \(\rho_0 = |\bm{0}\rangle\langle \bm 0|\), then measuring the final state in the basis \(M_{\bm{0}}\) must yield \(p_{\bm{0}} = 1\) for all \(W_i\).  
To achieve optimal training, i.e., \(K = U\), we define a cost function based on the average value of \(p_{\bm{0}}\) over a random set of \(W_i\), as given in Equation~(\ref{eq:cost}).  
This CQPT scheme can be implemented in a quantum circuit, as illustrated in Figure~\ref{fig:8}.  

\begin{figure}[t]
    \centering
    \includegraphics[width=\columnwidth]{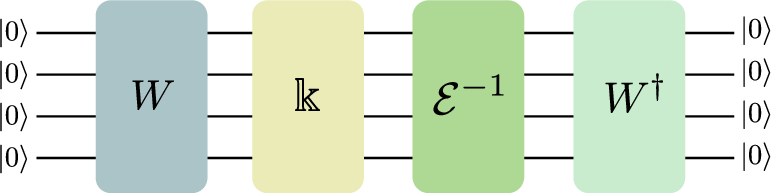}
    \caption{Implementation of the Kraus-based CQPT.}
    \label{fig:8}
\end{figure}

\section*{Appendix B: Implementation of Choi-based CQPT (Theorem \ref{theo:2})}\label{app:b}
\subsection*{Proof of Theorem \ref{theo:2}}
\label{app:b1}
To prove Theorem \ref{theo:2}, we assume that \(\mathbf{J}_{\mathcal{E}}\) is the corrected Choi matrix representing \(\mathcal{E}\) and show that the left-hand side (LHS) and right-hand side (RHS) of Equation~(\ref{eq:rhofrhoi2}) are equivalent.

First, we derive $\mathcal{E}(\rho_{\rm in})$, where
\begin{eqnarray}\label{eq:Erhoi}
\mathcal{E}(\rho_{\rm in}) &=& \mathrm{Tr}_{\mathcal{X}} \Big[ (\rho_{\rm in}^{\top} \otimes \mathbf{I}) \mathbf{J}_{\mathcal{E}} \Big] \nonumber \\
&=& \sum_k \Big(\langle k|\otimes \mathbf{I}\Big) 
\Big[ (\rho_{\rm in}^{\top} \otimes \mathbf{I}) \mathbf{J}_{\mathcal{E}} \Big]
\Big(|k\rangle\otimes \mathbf{I}\Big) \nonumber\\
&=& \sum_k \mathbf{T}_k^\dagger 
\Big[ (\rho_{\rm in}^{\top} \otimes \mathbf{I}) \mathbf{J}_{\mathcal{E}} \Big]
\mathbf{T}_k,
\end{eqnarray}
where $\mathbf{T}_k = |k\rangle\otimes \textbf{I} \in H_{\mathcal{X}}\otimes H_{\mathcal{Y}}$ for some orthonormal bases $\{|k\rangle\} \in H_{\mathcal{X}}$.

Next, we apply vectorization on both sides of Equation~(\ref{eq:Erhoi}), which gives
\begin{eqnarray}\label{eq:vecErhoi}
{\rm vec}[\mathcal{E}(\rho_{\rm in})] 
&=&
{\rm vec}\Big[\sum_k \mathbf{T}_k^\dagger 
\big[ (\rho_{\rm in}^{\top} \otimes \mathbf{I}) \mathbf{J}_{\mathcal{E}} \big]
\mathbf{T}_k\Big]
\quad \label{eq:a} \\
&=&
\sum_k {\rm vec}
\Big[\mathbf{T}_k^\dagger 
\big[ (\rho_{\rm in}^{\top} \otimes \mathbf{I}) \mathbf{J}_{\mathcal{E}} \big]
\mathbf{T}_k
\Big]
\quad \label{eq:b} \\
&=&
\sum_k
\Big(\mathbf{T}_k^\top \otimes \mathbf{T}_k^\dagger\Big)
{\rm vec}
\Big[(\rho_{\rm in}^{\top} \otimes \mathbf{I}) \mathbf{J}_{\mathcal{E}} \Big]
\quad \label{eq:c} \\
&=&
\sum_k
\Big(\mathbf{T}_k^\top \otimes \mathbf{T}_k^\dagger\Big)
\Big(\mathbf{J}_\mathcal{E}^\top \otimes \mathbf{I} \Big)
{\rm vec}
\Big[(\rho_{\rm in}^{\top} \otimes \mathbf{I}) \Big]
\quad \label{eq:d} \\
&=&
\mathbf{J}_{\mathcal{E}}^\top \ {\rm vec}[\rho_{\rm in}^\top]
\quad \label{eq:e}.
\end{eqnarray}
where from (\ref{eq:b}) to (\ref{eq:c}) and from (\ref{eq:c}) to (\ref{eq:d}) we used 
${\rm vec} [ABC] = (C^\top\otimes A){\rm vec}[B]$ \cite{PhysRevA.97.042322}.
To justify (\ref{eq:d}) $\to$ (\ref{eq:e}), we substitute \(\mathbf T_k=|k\rangle\otimes I\) into (\ref{eq:d}), which selects the \((k,k)\) block on \(H_{\mathcal X}\), the sum over \(k\) (in \ref{eq:d} ) therefore implements \(\mathrm{Tr}_{\mathcal X}\), and the resulting block-weighted sum
becomes \(\mathbf J_{\mathcal E}^{\top}\mathrm{vec}(\rho_{\rm in}^{\top})\) after vectorization.
Inversely, we also have 
\begin{equation}\label{eq:vecErhot}
    {\rm vec} [\mathcal{E}(\rho_{\rm in})^\top] = 
    \mathbf{J}_{\mathcal{E}}\ {\rm vec}[\rho_{\rm in}].
\end{equation}

Similarly, let the LHS of Equation~(\ref{eq:rhofrhoi2}) be
$\mathbf{L} = {\rm Tr}_{\mathcal{X}}\Big[\Big([\mathcal{E}(\rho_{\rm in})]^\top \otimes \mathbf{I}\Big)\mathbf{J}_\mathcal{E}^{+}
    \Big]$, which gives
\begin{equation}\label{eq:vecL}
    {\rm vec}[\mathbf{L}] = 
    (\mathbf{J}_{\mathcal{E}}^+)^\top
    {\rm vec}[\mathcal{E}(\rho_{\rm in})^\top].
\end{equation}
Substituting Equation~(\ref{eq:vecErhot}) into Equation~(\ref{eq:vecL}), we get 
\begin{eqnarray}\label{eq:vecL1}
    {\rm vec}[\mathbf{L}] &=& (\mathbf{J}_{\mathcal{E}}^+)^\top \mathbf{J}_{\mathcal{E}}\, {\rm vec}[\rho_{\rm in}] \nonumber \\
    &=& \mathbf{P} {\rm vec}[\rho_{\rm in}] \nonumber \\
    &=& {\rm vec}[\rho_{\rm in}] \equiv {\rm vec}[\mathbf{R}].
\end{eqnarray}
where we used 
$(\mathbf{J}_{\mathcal{E}}^+)^\top \mathbf{J}_{\mathcal{E}} = (\mathbf{J}_{\mathcal{E}}^+) \mathbf{J}_{\mathcal{E}} = \mathbf{P}$ is a projection matrix, i.e., $\mathbf{PX} = \mathbf{X}, \forall \mathbf{X}$, and $\mathbf{R} = \rho_{\rm in}$, which stands for the RHS. As a result, Equation~(\ref{eq:rhofrhoi2}) is proven. 

\begin{figure}[t]
    \centering
    \includegraphics[width=\columnwidth]{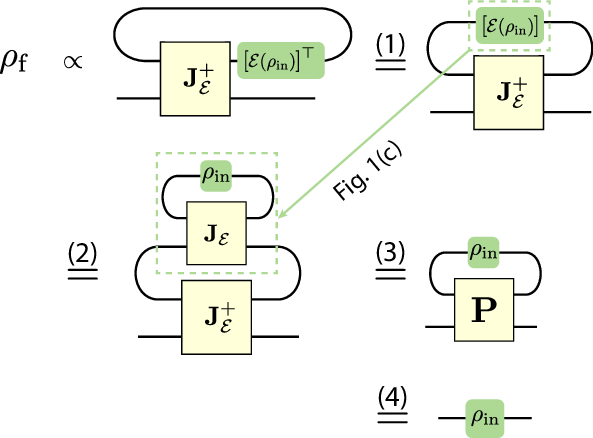}
    \caption{Graphical proof of Theorem \ref{theo:2}.}
    \label{fig:9}
\end{figure}

\subsection*{Graphical proof of Theorem \ref{theo:2}}
In Figure~\ref{fig:9}, we provide a graphical proof of Theorem \ref{theo:2}. Starting from Figure~\ref{fig:1}(d), we first transform $[\mathcal{E}(\rho_{\rm in})]^\top$ into $[\mathcal{E}(\rho_{\rm in})]$, and then apply the operation in Figure~\ref{fig:1}(c). Next, we replace $\mathbf{J}_\mathcal{E}^+\mathbf{J}_\mathcal{E}$ with $\mathbf{P}$. Finally, since $\rho_{\rm in}$ is a projection operator, we recover $\rho_{\rm in}$.

\subsection*{Proof of Remark \ref{rmCost}}
We prove that ${\rm Tr}\Big[
\Big(\mathcal{E}(\rho_{\rm in})^\top\otimes 
\rho_{\rm in}\Big)
\mathbf{J}_\mathcal{E}^{+}
\Big] = 1, \forall W$, when $\mathbf{J}_{\mathcal{E}}$ represents the Choi matrix for the quantum channel $\mathcal{E}$.
We start from 
\begin{equation}
    {\rm Tr}\Big[(A\otimes B)C\Big] = {\rm Tr}\Big[{\rm Tr}_{\mathcal{X}}\big[(A^\top\otimes \mathbf{I})C\big]B\Big],
\end{equation}
where $A\in H_{\mathcal{X}}, B\in H_{\mathcal{Y}}$
and $C\in H_{\mathcal{X}}\otimes H_{\mathcal{Y}}$.
Applying this identity to our case, we have
\begin{eqnarray}
    {\rm Tr}\Big[
    \big(\mathcal{E}(\rho_{\rm in})^\top \otimes \rho_{\rm in}\big) \mathbf{J}_\mathcal{E}^+
    \Big] 
    &=& {\rm Tr}_{\mathcal{Y}} \Big[
    {\rm Tr}_{\mathcal{X}} \Big[
    \big(\mathcal{E}(\rho_{\rm in})^\top \otimes \mathbf{I}\big) \mathbf{J}_\mathcal{E}^+
    \Big] \rho_{\rm in}
    \Big] \label{eq:ma} \\
    &=& {\rm Tr}_{\mathcal{Y}} \big[ \rho_{\rm in} \rho_{\rm in} \big] \label{eq:mb} \\
    &=& 1. \label{eq:mc}
\end{eqnarray}
where from Eq.~(\ref{eq:ma}) to Eq.~(\ref{eq:mb}) we used Eq.~(\ref{eq:rhofrhoi2}), and 
from Eq.~(\ref{eq:mb}) to Eq.~(\ref{eq:mc}) we used the fact that $\rho_{\rm in}$ is a pure state, i.e., $W|\bm0\rangle$.

\subsection*{Graphical proof of Remark \ref{rmCost}}
We illustrate a graphical proof of Remark \ref{rmCost} in Figure~\ref{fig:10}.

\begin{figure}[b]
    \centering
    \includegraphics[width=\columnwidth]{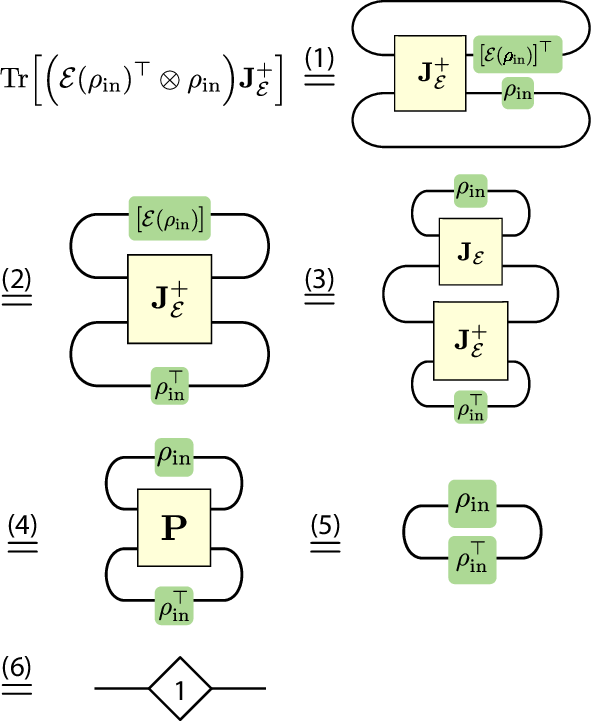}
    \caption{Graphical proof of Remark \ref{rmCost}.}
    \label{fig:10}
\end{figure}

\begin{figure}[t]
    \centering
    \includegraphics[width=\columnwidth]{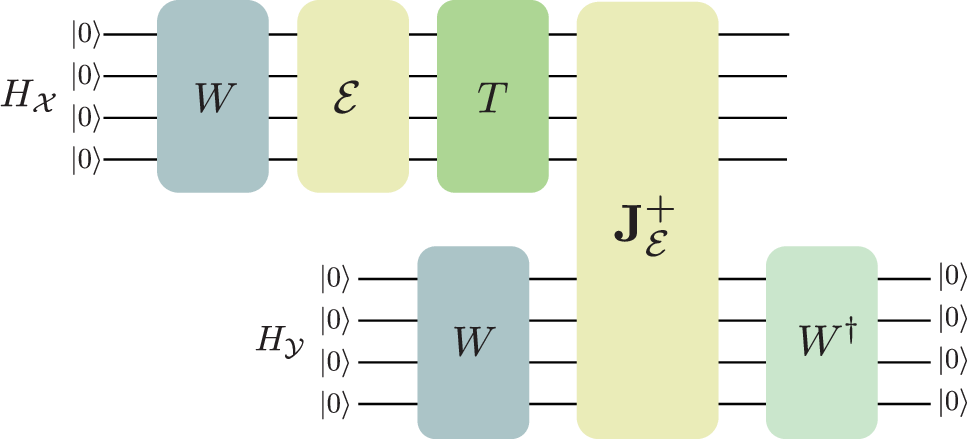}
    \caption{Conceptual circuit-based illustration of the Choi-based CQPT.}
    \label{fig:11}
\end{figure}

\begin{figure}[b]
    \centering
    \includegraphics[width=\columnwidth]{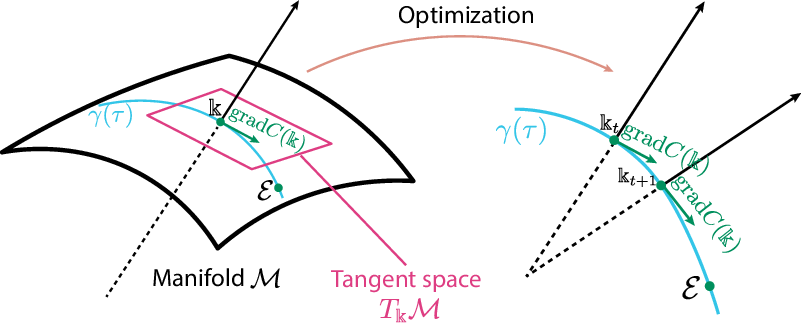}
    \caption{Illustration of the Riemannian optimization.}
    \label{fig:12}
\end{figure}

\subsection*{Conceptual circuit-based illustration}
To illustrate the cost function in Remark~\ref{rmCost}, we present a schematic representation of the operator flow within the CQPT framework, as shown in Fig.~\ref{fig:11}. In contrast to the unitary-channel case discussed earlier, the diagram here is purely conceptual and is intended to clarify the algebraic structure of the protocol. It does not represent a physically implementable quantum circuit, and a full hardware realization of the Choi matrix operations remains an open direction for future work.

First, we prepare a pure state \(\rho_{\rm in}=W|\mathbf 0\rangle\langle \mathbf 0|W^\dagger\) in \(H_{\mathcal X}\) and an identical copy in the \(H_{\mathcal Y}\) subsystem.
The state in \(H_{\mathcal X}\) is then subjected to the quantum channel \(\mathcal E\) followed by the transpose operator $T$.
As a result, the joint state becomes \(\mathcal E(\rho_{\rm in})^{\top}\otimes \rho_{\rm in}\).
Next, we apply the pseudoinverse Choi matrix $\mathbf{J}_{\mathcal{E}}^+$ to the joint system, followed by applying $W^\dagger$ to the $H_{\mathcal{Y}}$ subsystem, and then measure the probability $p_{\bm 0}$ of obtaining $|\mathbf{0}\rangle$. By optimizing $p_{\bm 0}$, we can refine the cost function to obtain the desired Choi matrix $\mathbf{J}_{\mathcal{E}}$.

We emphasize that the transpose operator $T$ is a purely formal algebraic operation that arises from the vectorization, and does not correspond to a physically implementable quantum gate. Therefore, a direct realization of this operation as a quantum circuit is nontrivial and lies beyond the capabilities of current quantum hardware, and remains an open direction for future research. The pseudoinverse Choi matrix \(\mathbf{J}_{\mathcal{E}}^{+}\) is instead modeled by a variational quantum circuit with trainable parameters. After optimization, a subsequent application of the pseudoinverse operation will recover the Choi matrix \(\mathbf{J}_{\mathcal{E}}\). 
In this work, the validity and feasibility of this approach are demonstrated through numerical simulations, where the Choi-based CQPT protocol is implemented using matrix operations as a proof of concept, rather than through an explicit circuit-level realization of \(\mathbf{J}_{\mathcal{E}}^{+}\).

\begin{table*}[t]
    \centering
    \scriptsize 
    \renewcommand{\arraystretch}{1.1} 
    \setlength{\tabcolsep}{4pt} 
    \begin{tabular}{@{}p{2.0cm} p{1.6cm} p{3.4cm} p{2.8cm} p{3cm} p{3cm}@{}}
        \toprule
        \textbf{Retraction Method} & \textbf{Order} & \textbf{Description} & \textbf{Complexity} & \textbf{Pros} & \textbf{Cons} \\
        \midrule
        QR-based & First & Uses QR decomposition to project onto the manifold. & \(\mathcal{O}(n^3)\) (QR) & Simple, efficient. & Lower accuracy. \\
        \midrule
        Polar-based & Second & Uses polar decomposition for retraction. & \(\mathcal{O}(n^3)\) (SVD) & Higher accuracy. & More expensive than QR. \\
        \midrule
        Cayley Transform & Second & Uses Cayley transform to approximate the exponential map. & \(\mathcal{O}(n^3)\) (Matrix inversion) & Preserves structure, works well for Lie groups. & Requires stability adjustments. \\
        \midrule
        Exponential Map & Exact & Computes the matrix exponential along the geodesic. & \(\mathcal{O}(n^3)\) (Matrix exp) & Most accurate. & Computationally expensive. \\
        \bottomrule
    \end{tabular}
    \caption{Comparison of retraction methods in Riemannian optimization.}
    \label{tab:retractions}
\end{table*}

\section*{Appendix C: Riemanian optimization}
Consider the Stiefel manifold, which consists of all $ d \times u $ complex matrices with orthonormal columns
$
\mathcal{M}(d,u) = \{ \Bbbk \in \mathbb{C}^{d \times u} \mid \Bbbk^\dagger \Bbbk = \mathbf{I}_u \}.
$
Here, $ \Bbbk $ is a $ \mathsf{k}d \times u $ matrix, corresponding to $\mathsf{k}$ Kraus operators in our case.
The tangent space $ T_\Bbbk \mathcal{M} $ at any point $ \Bbbk $ consists of all matrices $ \mathbf{V} \in \mathbb{C}^{d \times u} $ that satisfy  $
\Bbbk^\dagger \mathbf{V} + \mathbf{V}^\dagger \Bbbk = 0.
$
This ensures that infinitesimal small perturbations of $ \Bbbk $ remain on the Stiefel manifold.

For a curve $ \gamma(\tau) $ on the manifold, the tangent vector is defined through the gradient
$
\mathbf{V} = \left. \frac{d\gamma(\tau)}{d\tau} \right|_{\tau=0}.
$ 
Since $ \mathbf{V} $ represents the gradient of our cost function, we write it as
$
\mathbf{V} = \rm{grad} C(\Bbbk).
$
To ensure differentiability, we use the Riemannian manifold $ (\mathcal{M}, g) $, where $ g $ is a smoothly varying inner product on the tangent space
$
g_\Bbbk(\mathbf{V}, \mathbf{W}) = \rm{Tr}(\mathbf{V}^\dagger \mathbf{W}).
$
This leads to the geodesic equation  
$
\gamma(t) = \rm{Exp}_\Bbbk(\mathbf{V}).
$

For optimization, we use gradient-based methods. Given a function $ C: \mathcal{M}(d,u) \to \mathbb{R}$, the Riemannian gradient is obtained by projecting the Euclidean gradient onto the tangent space
\begin{equation}
\rm{grad} C(\Bbbk) = \nabla C(\Bbbk) - \Bbbk \rm{Sym}(\Bbbk^\dagger \nabla C(\Bbbk)),
\end{equation}  
where $ \rm{Sym}(\mathbf{A}) = (\mathbf{A}+\mathbf{A}^\dagger)/2 $.
Using Riemannian gradient descent, we update $ \Bbbk $ iteratively:  
\begin{equation}
\Bbbk_{t+1} = \rm{Retract}_\Bbbk(-\alpha \rm{grad} C(\Bbbk)),
\end{equation}  
where $ \alpha $ is the learning rate, and the retraction ensures the update on the manifold.

A retraction is a smooth mapping that approximates the exponential map while ensuring that updates remain on the manifold  
$
\rm{Retract}_\Bbbk: T_\Bbbk\mathcal{M} \to \mathcal{M}.
$
It satisfies $ \rm{Retract}_\Bbbk(0) = \Bbbk $ and approximates the true geodesic update 
$
\rm{Retract}_\Bbbk(\mathbf{V}) = \exp_\Bbbk(\mathbf{V}) + O(||\mathbf{V}||^2).
$ 
This preserves the constraint $ \Bbbk^\dagger\Bbbk = \mathbf{I}_u$.
Common retraction methods include QR-based methods, polar decomposition, the Cayley transform, and the exponential map. 
The QR-based retractions use a first-order approximation, whereas polar and Cayley methods are second-order, and the exponential map is the exact one. See Tab.~\ref{tab:retractions} for more detail. In this work, we apply the Cayley method. 
Finally, during optimization, $ \Bbbk $ moves along the geodesic $ \gamma(\tau) $ toward the optimal point representing $ \mathcal{E} $. See Figure.~\ref{fig:12} for an illustration.

\medskip
\textbf{Acknowledgments} \par 
This paper is supported by JSPS KAKENHI Grant No.23K13025.

\medskip
\textbf{Conflict of interest}  \par
The authors declare no conflict of interest.

\medskip
\textbf{Data Availability Statement} \par
The data that support the finding of this study are available from the corresponding author upon reasonable request.

\bibliography{ref}

\begin{thebibliography}{39}%
\makeatletter
\providecommand \@ifxundefined [1]{%
 \@ifx{#1\undefined}
}%
\providecommand \@ifnum [1]{%
 \ifnum #1\expandafter \@firstoftwo
 \else \expandafter \@secondoftwo
 \fi
}%
\providecommand \@ifx [1]{%
 \ifx #1\expandafter \@firstoftwo
 \else \expandafter \@secondoftwo
 \fi
}%
\providecommand \natexlab [1]{#1}%
\providecommand \enquote  [1]{``#1''}%
\providecommand \bibnamefont  [1]{#1}%
\providecommand \bibfnamefont [1]{#1}%
\providecommand \citenamefont [1]{#1}%
\providecommand \href@noop [0]{\@secondoftwo}%
\providecommand \href [0]{\begingroup \@sanitize@url \@href}%
\providecommand \@href[1]{\@@startlink{#1}\@@href}%
\providecommand \@@href[1]{\endgroup#1\@@endlink}%
\providecommand \@sanitize@url [0]{\catcode `\\12\catcode `\$12\catcode
  `\&12\catcode `\#12\catcode `\^12\catcode `\_12\catcode `\%12\relax}%
\providecommand \@@startlink[1]{}%
\providecommand \@@endlink[0]{}%
\providecommand \url  [0]{\begingroup\@sanitize@url \@url }%
\providecommand \@url [1]{\endgroup\@href {#1}{\urlprefix }}%
\providecommand \urlprefix  [0]{URL }%
\providecommand \Eprint [0]{\href }%
\providecommand \doibase [0]{https://doi.org/}%
\providecommand \selectlanguage [0]{\@gobble}%
\providecommand \bibinfo  [0]{\@secondoftwo}%
\providecommand \bibfield  [0]{\@secondoftwo}%
\providecommand \translation [1]{[#1]}%
\providecommand \BibitemOpen [0]{}%
\providecommand \bibitemStop [0]{}%
\providecommand \bibitemNoStop [0]{.\EOS\space}%
\providecommand \EOS [0]{\spacefactor3000\relax}%
\providecommand \BibitemShut  [1]{\csname bibitem#1\endcsname}%
\let\auto@bib@innerbib\@empty
\bibitem [{\citenamefont {Chuang}\ and\ \citenamefont
  {and}(1997)}]{Chuang01111997}%
  \BibitemOpen
  \bibfield  {author} {\bibinfo {author} {\bibfnamefont {I.~L.}\ \bibnamefont
  {Chuang}}\ and\ \bibinfo {author} {\bibfnamefont {M.~A.~N.}\ \bibnamefont
  {and}},\ }\bibfield  {title} {\bibinfo {title} {Prescription for experimental
  determination of the dynamics of a quantum black box},\ }\href
  {https://doi.org/10.1080/09500349708231894} {\bibfield  {journal} {\bibinfo
  {journal} {Journal of Modern Optics}\ }\textbf {\bibinfo {volume} {44}},\
  \bibinfo {pages} {2455} (\bibinfo {year} {1997})}\BibitemShut {NoStop}%
\bibitem [{\citenamefont {Poyatos}\ \emph {et~al.}(1997)\citenamefont
  {Poyatos}, \citenamefont {Cirac},\ and\ \citenamefont
  {Zoller}}]{PhysRevLett.78.390}%
  \BibitemOpen
  \bibfield  {author} {\bibinfo {author} {\bibfnamefont {J.~F.}\ \bibnamefont
  {Poyatos}}, \bibinfo {author} {\bibfnamefont {J.~I.}\ \bibnamefont {Cirac}},\
  and\ \bibinfo {author} {\bibfnamefont {P.}~\bibnamefont {Zoller}},\
  }\bibfield  {title} {\bibinfo {title} {Complete characterization of a quantum
  process: The two-bit quantum gate},\ }\href
  {https://doi.org/10.1103/PhysRevLett.78.390} {\bibfield  {journal} {\bibinfo
  {journal} {Phys. Rev. Lett.}\ }\textbf {\bibinfo {volume} {78}},\ \bibinfo
  {pages} {390} (\bibinfo {year} {1997})}\BibitemShut {NoStop}%
\bibitem [{\citenamefont {Nielsen}\ and\ \citenamefont
  {Chuang}(2000)}]{nielsen2000quantum}%
  \BibitemOpen
  \bibfield  {author} {\bibinfo {author} {\bibfnamefont {M.~A.}\ \bibnamefont
  {Nielsen}}\ and\ \bibinfo {author} {\bibfnamefont {I.~L.}\ \bibnamefont
  {Chuang}},\ }\href@noop {} {\emph {\bibinfo {title} {Quantum Computation and
  Quantum Information}}}\ (\bibinfo  {publisher} {Cambridge University Press},\
  \bibinfo {year} {2000})\BibitemShut {NoStop}%
\bibitem [{\citenamefont {O'Brien}\ \emph {et~al.}(2004)\citenamefont
  {O'Brien}, \citenamefont {Pryde}, \citenamefont {Gilchrist}, \citenamefont
  {James}, \citenamefont {Langford}, \citenamefont {Ralph},\ and\ \citenamefont
  {White}}]{PhysRevLett.93.080502}%
  \BibitemOpen
  \bibfield  {author} {\bibinfo {author} {\bibfnamefont {J.~L.}\ \bibnamefont
  {O'Brien}}, \bibinfo {author} {\bibfnamefont {G.~J.}\ \bibnamefont {Pryde}},
  \bibinfo {author} {\bibfnamefont {A.}~\bibnamefont {Gilchrist}}, \bibinfo
  {author} {\bibfnamefont {D.~F.~V.}\ \bibnamefont {James}}, \bibinfo {author}
  {\bibfnamefont {N.~K.}\ \bibnamefont {Langford}}, \bibinfo {author}
  {\bibfnamefont {T.~C.}\ \bibnamefont {Ralph}},\ and\ \bibinfo {author}
  {\bibfnamefont {A.~G.}\ \bibnamefont {White}},\ }\bibfield  {title} {\bibinfo
  {title} {Quantum process tomography of a controlled-not gate},\ }\href
  {https://doi.org/10.1103/PhysRevLett.93.080502} {\bibfield  {journal}
  {\bibinfo  {journal} {Phys. Rev. Lett.}\ }\textbf {\bibinfo {volume} {93}},\
  \bibinfo {pages} {080502} (\bibinfo {year} {2004})}\BibitemShut {NoStop}%
\bibitem [{\citenamefont {Riebe}\ \emph {et~al.}(2006)\citenamefont {Riebe},
  \citenamefont {Kim}, \citenamefont {Schindler}, \citenamefont {Monz},
  \citenamefont {Schmidt}, \citenamefont {K\"orber}, \citenamefont {H\"ansel},
  \citenamefont {H\"affner}, \citenamefont {Roos},\ and\ \citenamefont
  {Blatt}}]{PhysRevLett.97.220407}%
  \BibitemOpen
  \bibfield  {author} {\bibinfo {author} {\bibfnamefont {M.}~\bibnamefont
  {Riebe}}, \bibinfo {author} {\bibfnamefont {K.}~\bibnamefont {Kim}}, \bibinfo
  {author} {\bibfnamefont {P.}~\bibnamefont {Schindler}}, \bibinfo {author}
  {\bibfnamefont {T.}~\bibnamefont {Monz}}, \bibinfo {author} {\bibfnamefont
  {P.~O.}\ \bibnamefont {Schmidt}}, \bibinfo {author} {\bibfnamefont {T.~K.}\
  \bibnamefont {K\"orber}}, \bibinfo {author} {\bibfnamefont {W.}~\bibnamefont
  {H\"ansel}}, \bibinfo {author} {\bibfnamefont {H.}~\bibnamefont {H\"affner}},
  \bibinfo {author} {\bibfnamefont {C.~F.}\ \bibnamefont {Roos}},\ and\
  \bibinfo {author} {\bibfnamefont {R.}~\bibnamefont {Blatt}},\ }\bibfield
  {title} {\bibinfo {title} {Process tomography of ion trap quantum gates},\
  }\href {https://doi.org/10.1103/PhysRevLett.97.220407} {\bibfield  {journal}
  {\bibinfo  {journal} {Phys. Rev. Lett.}\ }\textbf {\bibinfo {volume} {97}},\
  \bibinfo {pages} {220407} (\bibinfo {year} {2006})}\BibitemShut {NoStop}%
\bibitem [{\citenamefont {Tinkey}\ \emph {et~al.}(2021)\citenamefont {Tinkey},
  \citenamefont {Meier}, \citenamefont {Clark}, \citenamefont {Seck},\ and\
  \citenamefont {Brown}}]{Tinkey_2021}%
  \BibitemOpen
  \bibfield  {author} {\bibinfo {author} {\bibfnamefont {H.~N.}\ \bibnamefont
  {Tinkey}}, \bibinfo {author} {\bibfnamefont {A.~M.}\ \bibnamefont {Meier}},
  \bibinfo {author} {\bibfnamefont {C.~R.}\ \bibnamefont {Clark}}, \bibinfo
  {author} {\bibfnamefont {C.~M.}\ \bibnamefont {Seck}},\ and\ \bibinfo
  {author} {\bibfnamefont {K.~R.}\ \bibnamefont {Brown}},\ }\bibfield  {title}
  {\bibinfo {title} {Quantum process tomography of a mølmer-sørensen gate via
  a global beam},\ }\href {https://doi.org/10.1088/2058-9565/ac0543} {\bibfield
   {journal} {\bibinfo  {journal} {Quantum Science and Technology}\ }\textbf
  {\bibinfo {volume} {6}},\ \bibinfo {pages} {034013} (\bibinfo {year}
  {2021})}\BibitemShut {NoStop}%
\bibitem [{\citenamefont {Bialczak}\ \emph {et~al.}(2010)\citenamefont
  {Bialczak}, \citenamefont {Ansmann}, \citenamefont {Hofheinz}, \citenamefont
  {Lucero}, \citenamefont {Neeley}, \citenamefont {O'Connell}, \citenamefont
  {Sank}, \citenamefont {Wang}, \citenamefont {Wenner}, \citenamefont
  {Steffen}, \citenamefont {Cleland},\ and\ \citenamefont
  {Martinis}}]{Bialczak2010}%
  \BibitemOpen
  \bibfield  {author} {\bibinfo {author} {\bibfnamefont {R.~C.}\ \bibnamefont
  {Bialczak}}, \bibinfo {author} {\bibfnamefont {M.}~\bibnamefont {Ansmann}},
  \bibinfo {author} {\bibfnamefont {M.}~\bibnamefont {Hofheinz}}, \bibinfo
  {author} {\bibfnamefont {E.}~\bibnamefont {Lucero}}, \bibinfo {author}
  {\bibfnamefont {M.}~\bibnamefont {Neeley}}, \bibinfo {author} {\bibfnamefont
  {A.~D.}\ \bibnamefont {O'Connell}}, \bibinfo {author} {\bibfnamefont
  {D.}~\bibnamefont {Sank}}, \bibinfo {author} {\bibfnamefont {H.}~\bibnamefont
  {Wang}}, \bibinfo {author} {\bibfnamefont {J.}~\bibnamefont {Wenner}},
  \bibinfo {author} {\bibfnamefont {M.}~\bibnamefont {Steffen}}, \bibinfo
  {author} {\bibfnamefont {A.~N.}\ \bibnamefont {Cleland}},\ and\ \bibinfo
  {author} {\bibfnamefont {J.~M.}\ \bibnamefont {Martinis}},\ }\bibfield
  {title} {\bibinfo {title} {Quantum process tomography of a universal
  entangling gate implemented with josephson phase qubits},\ }\href
  {https://doi.org/10.1038/nphys1639} {\bibfield  {journal} {\bibinfo
  {journal} {Nature Physics}\ }\textbf {\bibinfo {volume} {6}},\ \bibinfo
  {pages} {409} (\bibinfo {year} {2010})}\BibitemShut {NoStop}%
\bibitem [{\citenamefont {Xiao}\ \emph {et~al.}(2022)\citenamefont {Xiao},
  \citenamefont {Wang}, \citenamefont {Dong},\ and\ \citenamefont
  {Zhang}}]{9992920}%
  \BibitemOpen
  \bibfield  {author} {\bibinfo {author} {\bibfnamefont {S.}~\bibnamefont
  {Xiao}}, \bibinfo {author} {\bibfnamefont {Y.}~\bibnamefont {Wang}}, \bibinfo
  {author} {\bibfnamefont {D.}~\bibnamefont {Dong}},\ and\ \bibinfo {author}
  {\bibfnamefont {J.}~\bibnamefont {Zhang}},\ }\bibfield  {title} {\bibinfo
  {title} {Two-stage solution of quantum process tomography in the natural
  basis},\ }in\ \href {https://doi.org/10.1109/CDC51059.2022.9992920} {\emph
  {\bibinfo {booktitle} {2022 IEEE 61st Conference on Decision and Control
  (CDC)}}}\ (\bibinfo {year} {2022})\ pp.\ \bibinfo {pages}
  {5807--5812}\BibitemShut {NoStop}%
\bibitem [{\citenamefont {Quiroga}\ and\ \citenamefont
  {Kyrillidis}(2023)}]{10386455}%
  \BibitemOpen
  \bibfield  {author} {\bibinfo {author} {\bibfnamefont {D.~A.}\ \bibnamefont
  {Quiroga}}\ and\ \bibinfo {author} {\bibfnamefont {A.}~\bibnamefont
  {Kyrillidis}},\ }\bibfield  {title} {\bibinfo {title} {Using non-convex
  optimization in quantum process tomography: Factored gradient descent is
  tough to beat},\ }in\ \href {https://doi.org/10.1109/ICRC60800.2023.10386455}
  {\emph {\bibinfo {booktitle} {2023 IEEE International Conference on Rebooting
  Computing (ICRC)}}}\ (\bibinfo {year} {2023})\ pp.\ \bibinfo {pages}
  {1--10}\BibitemShut {NoStop}%
\bibitem [{\citenamefont {Shabani}\ \emph {et~al.}(2011)\citenamefont
  {Shabani}, \citenamefont {Kosut}, \citenamefont {Mohseni}, \citenamefont
  {Rabitz}, \citenamefont {Broome}, \citenamefont {Almeida}, \citenamefont
  {Fedrizzi},\ and\ \citenamefont {White}}]{PhysRevLett.106.100401}%
  \BibitemOpen
  \bibfield  {author} {\bibinfo {author} {\bibfnamefont {A.}~\bibnamefont
  {Shabani}}, \bibinfo {author} {\bibfnamefont {R.~L.}\ \bibnamefont {Kosut}},
  \bibinfo {author} {\bibfnamefont {M.}~\bibnamefont {Mohseni}}, \bibinfo
  {author} {\bibfnamefont {H.}~\bibnamefont {Rabitz}}, \bibinfo {author}
  {\bibfnamefont {M.~A.}\ \bibnamefont {Broome}}, \bibinfo {author}
  {\bibfnamefont {M.~P.}\ \bibnamefont {Almeida}}, \bibinfo {author}
  {\bibfnamefont {A.}~\bibnamefont {Fedrizzi}},\ and\ \bibinfo {author}
  {\bibfnamefont {A.~G.}\ \bibnamefont {White}},\ }\bibfield  {title} {\bibinfo
  {title} {Efficient measurement of quantum dynamics via compressive sensing},\
  }\href {https://doi.org/10.1103/PhysRevLett.106.100401} {\bibfield  {journal}
  {\bibinfo  {journal} {Phys. Rev. Lett.}\ }\textbf {\bibinfo {volume} {106}},\
  \bibinfo {pages} {100401} (\bibinfo {year} {2011})}\BibitemShut {NoStop}%
\bibitem [{\citenamefont {Xue}\ \emph {et~al.}(2023)\citenamefont {Xue},
  \citenamefont {Huang}, \citenamefont {Liu}, \citenamefont {Wang},
  \citenamefont {Shi}, \citenamefont {Liu}, \citenamefont {Fu}, \citenamefont
  {Huang}, \citenamefont {Deng},\ and\ \citenamefont
  {Wu}}]{PhysRevA.108.032419}%
  \BibitemOpen
  \bibfield  {author} {\bibinfo {author} {\bibfnamefont {S.}~\bibnamefont
  {Xue}}, \bibinfo {author} {\bibfnamefont {G.}~\bibnamefont {Huang}}, \bibinfo
  {author} {\bibfnamefont {Y.}~\bibnamefont {Liu}}, \bibinfo {author}
  {\bibfnamefont {D.}~\bibnamefont {Wang}}, \bibinfo {author} {\bibfnamefont
  {W.}~\bibnamefont {Shi}}, \bibinfo {author} {\bibfnamefont {Y.}~\bibnamefont
  {Liu}}, \bibinfo {author} {\bibfnamefont {X.}~\bibnamefont {Fu}}, \bibinfo
  {author} {\bibfnamefont {A.}~\bibnamefont {Huang}}, \bibinfo {author}
  {\bibfnamefont {M.}~\bibnamefont {Deng}},\ and\ \bibinfo {author}
  {\bibfnamefont {J.}~\bibnamefont {Wu}},\ }\bibfield  {title} {\bibinfo
  {title} {Efficient quantum process tomography for clifford circuits},\ }\href
  {https://doi.org/10.1103/PhysRevA.108.032419} {\bibfield  {journal} {\bibinfo
   {journal} {Phys. Rev. A}\ }\textbf {\bibinfo {volume} {108}},\ \bibinfo
  {pages} {032419} (\bibinfo {year} {2023})}\BibitemShut {NoStop}%
\bibitem [{\citenamefont {Ahmed}\ \emph {et~al.}(2023)\citenamefont {Ahmed},
  \citenamefont {Quijandr\'{\i}a},\ and\ \citenamefont
  {Kockum}}]{PhysRevLett.130.150402}%
  \BibitemOpen
  \bibfield  {author} {\bibinfo {author} {\bibfnamefont {S.}~\bibnamefont
  {Ahmed}}, \bibinfo {author} {\bibfnamefont {F.}~\bibnamefont
  {Quijandr\'{\i}a}},\ and\ \bibinfo {author} {\bibfnamefont {A.~F.}\
  \bibnamefont {Kockum}},\ }\bibfield  {title} {\bibinfo {title}
  {Gradient-descent quantum process tomography by learning kraus operators},\
  }\href {https://doi.org/10.1103/PhysRevLett.130.150402} {\bibfield  {journal}
  {\bibinfo  {journal} {Phys. Rev. Lett.}\ }\textbf {\bibinfo {volume} {130}},\
  \bibinfo {pages} {150402} (\bibinfo {year} {2023})}\BibitemShut {NoStop}%
\bibitem [{\citenamefont {Volya}\ \emph {et~al.}(2024)\citenamefont {Volya},
  \citenamefont {Nikitin},\ and\ \citenamefont
  {Mishra}}]{volya2024fastquantumprocesstomography}%
  \BibitemOpen
  \bibfield  {author} {\bibinfo {author} {\bibfnamefont {D.}~\bibnamefont
  {Volya}}, \bibinfo {author} {\bibfnamefont {A.}~\bibnamefont {Nikitin}},\
  and\ \bibinfo {author} {\bibfnamefont {P.}~\bibnamefont {Mishra}},\ }\href
  {https://arxiv.org/abs/2404.18840} {\bibinfo {title} {Fast quantum process
  tomography via riemannian gradient descent}} (\bibinfo {year} {2024}),\
  \Eprint {https://arxiv.org/abs/2404.18840} {arXiv:2404.18840 [quant-ph]}
  \BibitemShut {NoStop}%
\bibitem [{\citenamefont {Jiang}\ and\ \citenamefont {Dai}(2015)}]{Jiang2015}%
  \BibitemOpen
  \bibfield  {author} {\bibinfo {author} {\bibfnamefont {B.}~\bibnamefont
  {Jiang}}\ and\ \bibinfo {author} {\bibfnamefont {Y.-H.}\ \bibnamefont
  {Dai}},\ }\bibfield  {title} {\bibinfo {title} {A framework of constraint
  preserving update schemes for optimization on stiefel manifold},\ }\href
  {https://doi.org/10.1007/s10107-014-0816-7} {\bibfield  {journal} {\bibinfo
  {journal} {Mathematical Programming}\ }\textbf {\bibinfo {volume} {153}},\
  \bibinfo {pages} {535} (\bibinfo {year} {2015})}\BibitemShut {NoStop}%
\bibitem [{\citenamefont {Wiersema}\ and\ \citenamefont
  {Killoran}(2023)}]{PhysRevA.107.062421}%
  \BibitemOpen
  \bibfield  {author} {\bibinfo {author} {\bibfnamefont {R.}~\bibnamefont
  {Wiersema}}\ and\ \bibinfo {author} {\bibfnamefont {N.}~\bibnamefont
  {Killoran}},\ }\bibfield  {title} {\bibinfo {title} {Optimizing quantum
  circuits with riemannian gradient flow},\ }\href
  {https://doi.org/10.1103/PhysRevA.107.062421} {\bibfield  {journal} {\bibinfo
   {journal} {Phys. Rev. A}\ }\textbf {\bibinfo {volume} {107}},\ \bibinfo
  {pages} {062421} (\bibinfo {year} {2023})}\BibitemShut {NoStop}%
\bibitem [{\citenamefont {Torlai}\ \emph {et~al.}(2023)\citenamefont {Torlai},
  \citenamefont {Wood}, \citenamefont {Acharya}, \citenamefont {Carleo},
  \citenamefont {Carrasquilla},\ and\ \citenamefont {Aolita}}]{Torlai2023}%
  \BibitemOpen
  \bibfield  {author} {\bibinfo {author} {\bibfnamefont {G.}~\bibnamefont
  {Torlai}}, \bibinfo {author} {\bibfnamefont {C.~J.}\ \bibnamefont {Wood}},
  \bibinfo {author} {\bibfnamefont {A.}~\bibnamefont {Acharya}}, \bibinfo
  {author} {\bibfnamefont {G.}~\bibnamefont {Carleo}}, \bibinfo {author}
  {\bibfnamefont {J.}~\bibnamefont {Carrasquilla}},\ and\ \bibinfo {author}
  {\bibfnamefont {L.}~\bibnamefont {Aolita}},\ }\bibfield  {title} {\bibinfo
  {title} {Quantum process tomography with unsupervised learning and tensor
  networks},\ }\href {https://doi.org/10.1038/s41467-023-38332-9} {\bibfield
  {journal} {\bibinfo  {journal} {Nature Communications}\ }\textbf {\bibinfo
  {volume} {14}},\ \bibinfo {pages} {2858} (\bibinfo {year}
  {2023})}\BibitemShut {NoStop}%
\bibitem [{\citenamefont {Xue}\ \emph {et~al.}(2022)\citenamefont {Xue},
  \citenamefont {Liu}, \citenamefont {Wang}, \citenamefont {Zhu}, \citenamefont
  {Guo},\ and\ \citenamefont {Wu}}]{PhysRevA.105.032427}%
  \BibitemOpen
  \bibfield  {author} {\bibinfo {author} {\bibfnamefont {S.}~\bibnamefont
  {Xue}}, \bibinfo {author} {\bibfnamefont {Y.}~\bibnamefont {Liu}}, \bibinfo
  {author} {\bibfnamefont {Y.}~\bibnamefont {Wang}}, \bibinfo {author}
  {\bibfnamefont {P.}~\bibnamefont {Zhu}}, \bibinfo {author} {\bibfnamefont
  {C.}~\bibnamefont {Guo}},\ and\ \bibinfo {author} {\bibfnamefont
  {J.}~\bibnamefont {Wu}},\ }\bibfield  {title} {\bibinfo {title} {Variational
  quantum process tomography of unitaries},\ }\href
  {https://doi.org/10.1103/PhysRevA.105.032427} {\bibfield  {journal} {\bibinfo
   {journal} {Phys. Rev. A}\ }\textbf {\bibinfo {volume} {105}},\ \bibinfo
  {pages} {032427} (\bibinfo {year} {2022})}\BibitemShut {NoStop}%
\bibitem [{\citenamefont {Hai}\ and\ \citenamefont {Ho}(2023)}]{Hai2023}%
  \BibitemOpen
  \bibfield  {author} {\bibinfo {author} {\bibfnamefont {V.~T.}\ \bibnamefont
  {Hai}}\ and\ \bibinfo {author} {\bibfnamefont {L.~B.}\ \bibnamefont {Ho}},\
  }\bibfield  {title} {\bibinfo {title} {Universal compilation for quantum
  state tomography},\ }\href {https://doi.org/10.1038/s41598-023-30983-4}
  {\bibfield  {journal} {\bibinfo  {journal} {Scientific Reports}\ }\textbf
  {\bibinfo {volume} {13}},\ \bibinfo {pages} {3750} (\bibinfo {year}
  {2023})}\BibitemShut {NoStop}%
\bibitem [{\citenamefont {Tuan~Hai}\ \emph {et~al.}(2024)\citenamefont
  {Tuan~Hai}, \citenamefont {Tan~Viet}, \citenamefont {Urbaneja}, \citenamefont
  {Vu~Linh}, \citenamefont {Nguyen~Tran},\ and\ \citenamefont
  {Bin~Ho}}]{TuanHai2024}%
  \BibitemOpen
  \bibfield  {author} {\bibinfo {author} {\bibfnamefont {V.}~\bibnamefont
  {Tuan~Hai}}, \bibinfo {author} {\bibfnamefont {N.}~\bibnamefont {Tan~Viet}},
  \bibinfo {author} {\bibfnamefont {J.}~\bibnamefont {Urbaneja}}, \bibinfo
  {author} {\bibfnamefont {N.}~\bibnamefont {Vu~Linh}}, \bibinfo {author}
  {\bibfnamefont {L.}~\bibnamefont {Nguyen~Tran}},\ and\ \bibinfo {author}
  {\bibfnamefont {L.}~\bibnamefont {Bin~Ho}},\ }\bibfield  {title} {\bibinfo
  {title} {Multi-target quantum compilation algorithm},\ }\href
  {https://doi.org/10.1088/2632-2153/ad9705} {\bibfield  {journal} {\bibinfo
  {journal} {Machine Learning: Science and Technology}\ }\textbf {\bibinfo
  {volume} {5}},\ \bibinfo {pages} {045057} (\bibinfo {year}
  {2024})}\BibitemShut {NoStop}%
\bibitem [{\citenamefont {Heya}\ \emph {et~al.}(2018)\citenamefont {Heya},
  \citenamefont {Suzuki}, \citenamefont {Nakamura},\ and\ \citenamefont
  {Fujii}}]{heya2018variationalquantumgateoptimization}%
  \BibitemOpen
  \bibfield  {author} {\bibinfo {author} {\bibfnamefont {K.}~\bibnamefont
  {Heya}}, \bibinfo {author} {\bibfnamefont {Y.}~\bibnamefont {Suzuki}},
  \bibinfo {author} {\bibfnamefont {Y.}~\bibnamefont {Nakamura}},\ and\
  \bibinfo {author} {\bibfnamefont {K.}~\bibnamefont {Fujii}},\ }\href
  {https://arxiv.org/abs/1810.12745} {\bibinfo {title} {Variational quantum
  gate optimization}} (\bibinfo {year} {2018}),\ \Eprint
  {https://arxiv.org/abs/1810.12745} {arXiv:1810.12745 [quant-ph]} \BibitemShut
  {NoStop}%
\bibitem [{\citenamefont {Khatri}\ \emph {et~al.}(2019)\citenamefont {Khatri},
  \citenamefont {LaRose}, \citenamefont {Poremba}, \citenamefont {Cincio},
  \citenamefont {Sornborger},\ and\ \citenamefont
  {Coles}}]{Khatri2019quantumassisted}%
  \BibitemOpen
  \bibfield  {author} {\bibinfo {author} {\bibfnamefont {S.}~\bibnamefont
  {Khatri}}, \bibinfo {author} {\bibfnamefont {R.}~\bibnamefont {LaRose}},
  \bibinfo {author} {\bibfnamefont {A.}~\bibnamefont {Poremba}}, \bibinfo
  {author} {\bibfnamefont {L.}~\bibnamefont {Cincio}}, \bibinfo {author}
  {\bibfnamefont {A.~T.}\ \bibnamefont {Sornborger}},\ and\ \bibinfo {author}
  {\bibfnamefont {P.~J.}\ \bibnamefont {Coles}},\ }\bibfield  {title} {\bibinfo
  {title} {Quantum-assisted quantum compiling},\ }\href
  {https://doi.org/10.22331/q-2019-05-13-140} {\bibfield  {journal} {\bibinfo
  {journal} {{Quantum}}\ }\textbf {\bibinfo {volume} {3}},\ \bibinfo {pages}
  {140} (\bibinfo {year} {2019})}\BibitemShut {NoStop}%
\bibitem [{\citenamefont {Volkoff}\ \emph {et~al.}(2021)\citenamefont
  {Volkoff}, \citenamefont {Holmes},\ and\ \citenamefont
  {Sornborger}}]{PRXQuantum.2.040327}%
  \BibitemOpen
  \bibfield  {author} {\bibinfo {author} {\bibfnamefont {T.}~\bibnamefont
  {Volkoff}}, \bibinfo {author} {\bibfnamefont {Z.}~\bibnamefont {Holmes}},\
  and\ \bibinfo {author} {\bibfnamefont {A.}~\bibnamefont {Sornborger}},\
  }\bibfield  {title} {\bibinfo {title} {Universal compiling and
  (no-)free-lunch theorems for continuous-variable quantum learning},\ }\href
  {https://doi.org/10.1103/PRXQuantum.2.040327} {\bibfield  {journal} {\bibinfo
   {journal} {PRX Quantum}\ }\textbf {\bibinfo {volume} {2}},\ \bibinfo {pages}
  {040327} (\bibinfo {year} {2021})}\BibitemShut {NoStop}%
\bibitem [{\citenamefont {Hai}\ \emph {et~al.}(2024)\citenamefont {Hai},
  \citenamefont {Viet},\ and\ \citenamefont {Ho}}]{HAI2024101726}%
  \BibitemOpen
  \bibfield  {author} {\bibinfo {author} {\bibfnamefont {V.~T.}\ \bibnamefont
  {Hai}}, \bibinfo {author} {\bibfnamefont {N.~T.}\ \bibnamefont {Viet}},\ and\
  \bibinfo {author} {\bibfnamefont {L.~B.}\ \bibnamefont {Ho}},\ }\bibfield
  {title} {\bibinfo {title} {$\langle$ qo$|$op$\rangle$: A quantum object
  optimizer},\ }\href
  {https://doi.org/https://doi.org/10.1016/j.softx.2024.101726} {\bibfield
  {journal} {\bibinfo  {journal} {SoftwareX}\ }\textbf {\bibinfo {volume}
  {26}},\ \bibinfo {pages} {101726} (\bibinfo {year} {2024})}\BibitemShut
  {NoStop}%
\bibitem [{\citenamefont {Bouchard}\ \emph {et~al.}(2019)\citenamefont
  {Bouchard}, \citenamefont {Hufnagel}, \citenamefont {Koutn{\'{y}}},
  \citenamefont {Abbas}, \citenamefont {Sit}, \citenamefont {Heshami},
  \citenamefont {Fickler},\ and\ \citenamefont
  {Karimi}}]{Bouchard2019quantumprocess}%
  \BibitemOpen
  \bibfield  {author} {\bibinfo {author} {\bibfnamefont {F.}~\bibnamefont
  {Bouchard}}, \bibinfo {author} {\bibfnamefont {F.}~\bibnamefont {Hufnagel}},
  \bibinfo {author} {\bibfnamefont {D.}~\bibnamefont {Koutn{\'{y}}}}, \bibinfo
  {author} {\bibfnamefont {A.}~\bibnamefont {Abbas}}, \bibinfo {author}
  {\bibfnamefont {A.}~\bibnamefont {Sit}}, \bibinfo {author} {\bibfnamefont
  {K.}~\bibnamefont {Heshami}}, \bibinfo {author} {\bibfnamefont
  {R.}~\bibnamefont {Fickler}},\ and\ \bibinfo {author} {\bibfnamefont
  {E.}~\bibnamefont {Karimi}},\ }\bibfield  {title} {\bibinfo {title} {Quantum
  process tomography of a high-dimensional quantum communication channel},\
  }\href {https://doi.org/10.22331/q-2019-05-06-138} {\bibfield  {journal}
  {\bibinfo  {journal} {{Quantum}}\ }\textbf {\bibinfo {volume} {3}},\ \bibinfo
  {pages} {138} (\bibinfo {year} {2019})}\BibitemShut {NoStop}%
\bibitem [{\citenamefont {Granade}\ \emph {et~al.}(2017)\citenamefont
  {Granade}, \citenamefont {Ferrie},\ and\ \citenamefont
  {Flammia}}]{Granade_2017}%
  \BibitemOpen
  \bibfield  {author} {\bibinfo {author} {\bibfnamefont {C.}~\bibnamefont
  {Granade}}, \bibinfo {author} {\bibfnamefont {C.}~\bibnamefont {Ferrie}},\
  and\ \bibinfo {author} {\bibfnamefont {S.~T.}\ \bibnamefont {Flammia}},\
  }\bibfield  {title} {\bibinfo {title} {Practical adaptive quantum
  tomography*},\ }\href {https://doi.org/10.1088/1367-2630/aa8fe6} {\bibfield
  {journal} {\bibinfo  {journal} {New Journal of Physics}\ }\textbf {\bibinfo
  {volume} {19}},\ \bibinfo {pages} {113017} (\bibinfo {year}
  {2017})}\BibitemShut {NoStop}%
\bibitem [{\citenamefont {Christandl}\ \emph {et~al.}(2009)\citenamefont
  {Christandl}, \citenamefont {K\"onig},\ and\ \citenamefont
  {Renner}}]{PhysRevLett.102.020504}%
  \BibitemOpen
  \bibfield  {author} {\bibinfo {author} {\bibfnamefont {M.}~\bibnamefont
  {Christandl}}, \bibinfo {author} {\bibfnamefont {R.}~\bibnamefont
  {K\"onig}},\ and\ \bibinfo {author} {\bibfnamefont {R.}~\bibnamefont
  {Renner}},\ }\bibfield  {title} {\bibinfo {title} {Postselection technique
  for quantum channels with applications to quantum cryptography},\ }\href
  {https://doi.org/10.1103/PhysRevLett.102.020504} {\bibfield  {journal}
  {\bibinfo  {journal} {Phys. Rev. Lett.}\ }\textbf {\bibinfo {volume} {102}},\
  \bibinfo {pages} {020504} (\bibinfo {year} {2009})}\BibitemShut {NoStop}%
\bibitem [{\citenamefont {Rodionov}\ \emph {et~al.}(2014)\citenamefont
  {Rodionov}, \citenamefont {Veitia}, \citenamefont {Barends}, \citenamefont
  {Kelly}, \citenamefont {Sank}, \citenamefont {Wenner}, \citenamefont
  {Martinis}, \citenamefont {Kosut},\ and\ \citenamefont
  {Korotkov}}]{PhysRevB.90.144504}%
  \BibitemOpen
  \bibfield  {author} {\bibinfo {author} {\bibfnamefont {A.~V.}\ \bibnamefont
  {Rodionov}}, \bibinfo {author} {\bibfnamefont {A.}~\bibnamefont {Veitia}},
  \bibinfo {author} {\bibfnamefont {R.}~\bibnamefont {Barends}}, \bibinfo
  {author} {\bibfnamefont {J.}~\bibnamefont {Kelly}}, \bibinfo {author}
  {\bibfnamefont {D.}~\bibnamefont {Sank}}, \bibinfo {author} {\bibfnamefont
  {J.}~\bibnamefont {Wenner}}, \bibinfo {author} {\bibfnamefont {J.~M.}\
  \bibnamefont {Martinis}}, \bibinfo {author} {\bibfnamefont {R.~L.}\
  \bibnamefont {Kosut}},\ and\ \bibinfo {author} {\bibfnamefont {A.~N.}\
  \bibnamefont {Korotkov}},\ }\bibfield  {title} {\bibinfo {title} {Compressed
  sensing quantum process tomography for superconducting quantum gates},\
  }\href {https://doi.org/10.1103/PhysRevB.90.144504} {\bibfield  {journal}
  {\bibinfo  {journal} {Phys. Rev. B}\ }\textbf {\bibinfo {volume} {90}},\
  \bibinfo {pages} {144504} (\bibinfo {year} {2014})}\BibitemShut {NoStop}%
\bibitem [{\citenamefont {Gaikwad}\ \emph {et~al.}(2022)\citenamefont
  {Gaikwad}, \citenamefont {Shende}, \citenamefont {{Arvind}},\ and\
  \citenamefont {Dorai}}]{Gaikwad2022}%
  \BibitemOpen
  \bibfield  {author} {\bibinfo {author} {\bibfnamefont {A.}~\bibnamefont
  {Gaikwad}}, \bibinfo {author} {\bibfnamefont {K.}~\bibnamefont {Shende}},
  \bibinfo {author} {\bibnamefont {{Arvind}}},\ and\ \bibinfo {author}
  {\bibfnamefont {K.}~\bibnamefont {Dorai}},\ }\bibfield  {title} {\bibinfo
  {title} {Implementing efficient selective quantum process tomography of
  superconducting quantum gates on ibm quantum experience},\ }\href
  {https://doi.org/10.1038/s41598-022-07721-3} {\bibfield  {journal} {\bibinfo
  {journal} {Scientific Reports}\ }\textbf {\bibinfo {volume} {12}},\ \bibinfo
  {pages} {3688} (\bibinfo {year} {2022})}\BibitemShut {NoStop}%
\bibitem [{\citenamefont {Lee}\ \emph {et~al.}(2021)\citenamefont {Lee},
  \citenamefont {Park}, \citenamefont {Lee}, \citenamefont {Kim},\ and\
  \citenamefont {Bang}}]{PhysRevLett.126.170504}%
  \BibitemOpen
  \bibfield  {author} {\bibinfo {author} {\bibfnamefont {S.~M.}\ \bibnamefont
  {Lee}}, \bibinfo {author} {\bibfnamefont {H.~S.}\ \bibnamefont {Park}},
  \bibinfo {author} {\bibfnamefont {J.}~\bibnamefont {Lee}}, \bibinfo {author}
  {\bibfnamefont {J.}~\bibnamefont {Kim}},\ and\ \bibinfo {author}
  {\bibfnamefont {J.}~\bibnamefont {Bang}},\ }\bibfield  {title} {\bibinfo
  {title} {Quantum state learning via single-shot measurements},\ }\href
  {https://doi.org/10.1103/PhysRevLett.126.170504} {\bibfield  {journal}
  {\bibinfo  {journal} {Phys. Rev. Lett.}\ }\textbf {\bibinfo {volume} {126}},\
  \bibinfo {pages} {170504} (\bibinfo {year} {2021})}\BibitemShut {NoStop}%
\bibitem [{\citenamefont {Tagare}(2011)}]{Tagare2011NotesOO}%
  \BibitemOpen
  \bibfield  {author} {\bibinfo {author} {\bibfnamefont {H.~D.}\ \bibnamefont
  {Tagare}},\ }\bibfield  {title} {\bibinfo {title} {Notes on optimization on
  stiefel manifolds}\ }(\bibinfo {year} {2011})\BibitemShut {NoStop}%
\bibitem [{\citenamefont {Brand}\ \emph {et~al.}(2024)\citenamefont {Brand},
  \citenamefont {Sinayskiy},\ and\ \citenamefont {Petruccione}}]{Brand2024}%
  \BibitemOpen
  \bibfield  {author} {\bibinfo {author} {\bibfnamefont {D.}~\bibnamefont
  {Brand}}, \bibinfo {author} {\bibfnamefont {I.}~\bibnamefont {Sinayskiy}},\
  and\ \bibinfo {author} {\bibfnamefont {F.}~\bibnamefont {Petruccione}},\
  }\bibfield  {title} {\bibinfo {title} {Markovian noise modelling and
  parameter extraction framework for quantum devices},\ }\href
  {https://doi.org/10.1038/s41598-024-54598-5} {\bibfield  {journal} {\bibinfo
  {journal} {Scientific Reports}\ }\textbf {\bibinfo {volume} {14}},\ \bibinfo
  {pages} {4769} (\bibinfo {year} {2024})}\BibitemShut {NoStop}%
\bibitem [{\citenamefont {Liu}\ \emph {et~al.}(2025)\citenamefont {Liu},
  \citenamefont {Lin}, \citenamefont {Chen}, \citenamefont {Xue}, \citenamefont
  {Sun}, \citenamefont {Li}, \citenamefont {Zhuang}, \citenamefont {Wang},
  \citenamefont {Wu}, \citenamefont {Gong},\ and\ \citenamefont
  {Guo}}]{Liu2025simulationofopen}%
  \BibitemOpen
  \bibfield  {author} {\bibinfo {author} {\bibfnamefont {H.-Y.}\ \bibnamefont
  {Liu}}, \bibinfo {author} {\bibfnamefont {X.}~\bibnamefont {Lin}}, \bibinfo
  {author} {\bibfnamefont {Z.-Y.}\ \bibnamefont {Chen}}, \bibinfo {author}
  {\bibfnamefont {C.}~\bibnamefont {Xue}}, \bibinfo {author} {\bibfnamefont
  {T.-P.}\ \bibnamefont {Sun}}, \bibinfo {author} {\bibfnamefont {Q.-S.}\
  \bibnamefont {Li}}, \bibinfo {author} {\bibfnamefont {X.-N.}\ \bibnamefont
  {Zhuang}}, \bibinfo {author} {\bibfnamefont {Y.-J.}\ \bibnamefont {Wang}},
  \bibinfo {author} {\bibfnamefont {Y.-C.}\ \bibnamefont {Wu}}, \bibinfo
  {author} {\bibfnamefont {M.}~\bibnamefont {Gong}},\ and\ \bibinfo {author}
  {\bibfnamefont {G.-P.}\ \bibnamefont {Guo}},\ }\bibfield  {title} {\bibinfo
  {title} {Simulation of open quantum systems on universal quantum computers},\
  }\href {https://doi.org/10.22331/q-2025-06-05-1765} {\bibfield  {journal}
  {\bibinfo  {journal} {{Quantum}}\ }\textbf {\bibinfo {volume} {9}},\ \bibinfo
  {pages} {1765} (\bibinfo {year} {2025})}\BibitemShut {NoStop}%
\bibitem [{\citenamefont {Ding}\ \emph {et~al.}(2024)\citenamefont {Ding},
  \citenamefont {Li},\ and\ \citenamefont {Lin}}]{PRXQuantum.5.020332}%
  \BibitemOpen
  \bibfield  {author} {\bibinfo {author} {\bibfnamefont {Z.}~\bibnamefont
  {Ding}}, \bibinfo {author} {\bibfnamefont {X.}~\bibnamefont {Li}},\ and\
  \bibinfo {author} {\bibfnamefont {L.}~\bibnamefont {Lin}},\ }\bibfield
  {title} {\bibinfo {title} {Simulating open quantum systems using hamiltonian
  simulations},\ }\href {https://doi.org/10.1103/PRXQuantum.5.020332}
  {\bibfield  {journal} {\bibinfo  {journal} {PRX Quantum}\ }\textbf {\bibinfo
  {volume} {5}},\ \bibinfo {pages} {020332} (\bibinfo {year}
  {2024})}\BibitemShut {NoStop}%
\bibitem [{\citenamefont
  {Mezzadri}(2007)}]{mezzadri2007generaterandommatricesclassical}%
  \BibitemOpen
  \bibfield  {author} {\bibinfo {author} {\bibfnamefont {F.}~\bibnamefont
  {Mezzadri}},\ }\href {https://arxiv.org/abs/math-ph/0609050} {\bibinfo
  {title} {How to generate random matrices from the classical compact groups}}
  (\bibinfo {year} {2007}),\ \Eprint {https://arxiv.org/abs/math-ph/0609050}
  {arXiv:math-ph/0609050 [math-ph]} \BibitemShut {NoStop}%
\bibitem [{\citenamefont {Surawy-Stepney}\ \emph {et~al.}(2022)\citenamefont
  {Surawy-Stepney}, \citenamefont {Kahn}, \citenamefont {Kueng},\ and\
  \citenamefont {Guta}}]{SurawyStepney2022projectedleast}%
  \BibitemOpen
  \bibfield  {author} {\bibinfo {author} {\bibfnamefont {T.}~\bibnamefont
  {Surawy-Stepney}}, \bibinfo {author} {\bibfnamefont {J.}~\bibnamefont
  {Kahn}}, \bibinfo {author} {\bibfnamefont {R.}~\bibnamefont {Kueng}},\ and\
  \bibinfo {author} {\bibfnamefont {M.}~\bibnamefont {Guta}},\ }\bibfield
  {title} {\bibinfo {title} {Projected {L}east-{S}quares {Q}uantum {P}rocess
  {T}omography},\ }\href {https://doi.org/10.22331/q-2022-10-20-844} {\bibfield
   {journal} {\bibinfo  {journal} {{Quantum}}\ }\textbf {\bibinfo {volume}
  {6}},\ \bibinfo {pages} {844} (\bibinfo {year} {2022})}\BibitemShut {NoStop}%
\bibitem [{\citenamefont {Pokharel}\ \emph {et~al.}(2024)\citenamefont
  {Pokharel}, \citenamefont {Srinivasan}, \citenamefont {Quiroz},\ and\
  \citenamefont {Boots}}]{PhysRevResearch.6.013187}%
  \BibitemOpen
  \bibfield  {author} {\bibinfo {author} {\bibfnamefont {B.}~\bibnamefont
  {Pokharel}}, \bibinfo {author} {\bibfnamefont {S.}~\bibnamefont
  {Srinivasan}}, \bibinfo {author} {\bibfnamefont {G.}~\bibnamefont {Quiroz}},\
  and\ \bibinfo {author} {\bibfnamefont {B.}~\bibnamefont {Boots}},\ }\bibfield
   {title} {\bibinfo {title} {Scalable measurement error mitigation via
  iterative bayesian unfolding},\ }\href
  {https://doi.org/10.1103/PhysRevResearch.6.013187} {\bibfield  {journal}
  {\bibinfo  {journal} {Phys. Rev. Res.}\ }\textbf {\bibinfo {volume} {6}},\
  \bibinfo {pages} {013187} (\bibinfo {year} {2024})}\BibitemShut {NoStop}%
\bibitem [{\citenamefont {Alcalde~Puente}\ and\ \citenamefont
  {Rizzi}(2025)}]{AlcaldePuente2025learningfeedback}%
  \BibitemOpen
  \bibfield  {author} {\bibinfo {author} {\bibfnamefont {D.}~\bibnamefont
  {Alcalde~Puente}}\ and\ \bibinfo {author} {\bibfnamefont {M.}~\bibnamefont
  {Rizzi}},\ }\bibfield  {title} {\bibinfo {title} {Learning {F}eedback
  {M}echanisms for {M}easurement-{B}ased {V}ariational {Q}uantum {S}tate
  {P}reparation},\ }\href {https://doi.org/10.22331/q-2025-07-11-1792}
  {\bibfield  {journal} {\bibinfo  {journal} {{Quantum}}\ }\textbf {\bibinfo
  {volume} {9}},\ \bibinfo {pages} {1792} (\bibinfo {year} {2025})}\BibitemShut
  {NoStop}%
\bibitem [{\citenamefont {Wang}\ and\ \citenamefont {Wang}(2025)}]{Wang2025}%
  \BibitemOpen
  \bibfield  {author} {\bibinfo {author} {\bibfnamefont {Z.~W.}\ \bibnamefont
  {Wang}}\ and\ \bibinfo {author} {\bibfnamefont {Z.~M.}\ \bibnamefont
  {Wang}},\ }\bibfield  {title} {\bibinfo {title} {Arbitrary quantum states
  preparation aided by deep reinforcement learning},\ }\href
  {https://doi.org/10.1088/1402-4896/adbabc} {\bibfield  {journal} {\bibinfo
  {journal} {Physica Scripta}\ }\textbf {\bibinfo {volume} {100}},\ \bibinfo
  {pages} {045103} (\bibinfo {year} {2025})}\BibitemShut {NoStop}%
\bibitem [{\citenamefont {\ifmmode~\check{S}\else
  \v{S}\fi{}afr\'anek}(2018)}]{PhysRevA.97.042322}%
  \BibitemOpen
  \bibfield  {author} {\bibinfo {author} {\bibfnamefont {D.}~\bibnamefont
  {\ifmmode~\check{S}\else \v{S}\fi{}afr\'anek}},\ }\bibfield  {title}
  {\bibinfo {title} {Simple expression for the quantum fisher information
  matrix},\ }\href {https://doi.org/10.1103/PhysRevA.97.042322} {\bibfield
  {journal} {\bibinfo  {journal} {Phys. Rev. A}\ }\textbf {\bibinfo {volume}
  {97}},\ \bibinfo {pages} {042322} (\bibinfo {year} {2018})}\BibitemShut
  {NoStop}%
\end{thebibliography}%

\end{document}